\documentclass[conference]{IEEEtran}
\IEEEoverridecommandlockouts

\usepackage{cite}
\usepackage{amsmath,amssymb,amsfonts,amsthm}
\usepackage{algorithmic}
\usepackage{graphicx}
\usepackage{textcomp} 
\usepackage{hyperref}
\usepackage{tabularx}
\usepackage{booktabs}
\usepackage{xcolor}
\newtheorem{proposition}{Proposition}
\newtheorem{definition}{Definition}
\def\BibTeX{{\rm B\kern-.05em{\sc i\kern-.025em b}\kern-.08em
    T\kern-.1667em\lower.7ex\hbox{E}\kern-.125emX}}
\begin{document}

\title{How Much Does Submitting a Burst of Jobs Impact Waiting Time? An Analysis of the Jean Zay Workload Trace\\

    \thanks{We would like to thank the IDRIS team for providing the workload traces}
}
\author{\IEEEauthorblockN{1\textsuperscript{st} Marius Garénaux}
    \IEEEauthorblockA{\textit{IRISA, Université de Rennes} \\
        Rennes, France \\
        \url{https://orcid.org/0009-0003-7178-2812}}
    \and
    \IEEEauthorblockN{2\textsuperscript{nd} François Bodin}
    \IEEEauthorblockA{\textit{IRISA, Université de Rennes} \\
        Rennes, France \\
        \url{https://orcid.org/0000-0002-0928-2564}}
}
\maketitle

\begin{abstract}
    This paper examines job waiting times on the Jean Zay supercomputer at
    IDRIS (\textit{Institut du Développement et des Ressources en Informatique
        Scientifique \url{http://www.idris.fr/}}), driven by the need to improve
    scheduling strategies for cross-facility scientific workflows
    \cite{ref-ferreiradasilvaFrontiersScientificWorkflows2024}, \cite{ref-antypasEnablingDiscoveryData2021},
    \cite{ref-deelmanFutureScientificWorkflows2018}. Drawing on
    large workload traces, we show that job waiting times are
    not independent of one another but are instead correlated through
    job bursts—batches of identical jobs submitted together, also called ``flurries''
    \cite{ref-feitelsonFlurries}. We introduce a
    method for quantifying the effect of these bursts on waiting times, and find
    a strong Pearson correlation between waiting time and a job's rank within
    its burst. This correlation is above 0.8 for more than 40\% of the jobs.
    We further derive an exact formula for the Slurm fairshare factor based
    on a convolution product, showing that the burst
    penalty operates on a much shorter timescale than fairshare itself—pointing to a
    distinct throttling mechanism specifically targeting bulk submissions. Once
    burst effects are accounted for, we use Welch's t-test to find
    that partition occupancy and number of requested cores have a significant effect
    on waiting time, though the effect is not significant for all partitions.
\end{abstract}

\begin{IEEEkeywords}
    waiting time, HPC, job bursts, statistical analysis, Jean Zay.
\end{IEEEkeywords}

\section{Introduction}\label{introduction}

In the evolution of scientific workflows from single-facility to
distributed (cross-facility) deployments
\cite{ref-ferreiradasilvaFrontiersScientificWorkflows2024},
\cite{ref-antypasEnablingDiscoveryData2021},
\cite{ref-deelmanFutureScientificWorkflows2018}, managing
uncertainty in resource allocation has become an increasingly important concern. This
shift is driven by several factors, including machine switching during maintenance,
the ability to match specific tasks to specialized hardware architectures, the
potential to reduce the overall makespan through high-level parallelism across
multiple sites, and the need to process massive data volumes close to their
source of production \cite{ref-antypasEnablingDiscoveryData2021}. Typically,
executing code on a supercomputer involves requesting a specific number
of cores for a given duration. A batch scheduler (such as Slurm, PBS,
or OAR) then queues the job until the requested resources become
available. While cross-facility workflows---where tasks are distributed
across multiple supercomputers---face these same fundamental limitations, the
systemic impact is amplified. A delay in a single task can propagate
through the entire workflow, adversely affecting the overall makespan.
Consequently, a precise characterization of
waiting times is essential to inform the design of future high-level
workflow orchestration methods.

Predicting this latency is notoriously difficult due to the complex
interplay between batch scheduler heuristics, hardware constraints, and
user behavior. To
investigate these dynamics, this study utilizes two workload traces
from the Jean Zay supercomputer (hosted at IDRIS). These traces
comprise approximately 80,000 and 95,000 pseudonymized jobs,
spanning respectively one week in October 2025 and one week in March 2026. The former
was used for exploratory data analysis, while the
latter was reserved exclusively for validation.

In Section \ref{related-works}, we review previous work on waiting time analysis and prediction.
In Section \ref{dataset-presentation-jean-zay-workload-traces}, we introduce
the Jean Zay workload traces, highlighting the strong disparities between users.
We then examine the impact of job bursts on waiting times in Section
\ref{the-phenomenon-of-job-bursts}, showing that more than 40\% of
jobs' waiting times are determined by their rank within a burst. In Section
\ref{slurm-fairshare-factor}, we derive an exact formulation of the Slurm
fairshare factor based on a convolution product. This shows that the observed
``burst penalty'' operates on a much shorter timescale than the fairshare factor
and is therefore unlikely to originate from the fairshare mechanism itself. Finally,
in Section \ref{analysis-of-waiting-times}, we exclude jobs belonging to bursts
and analyze the remaining jobs. We show that jobs submitted during periods of
lower partition occupancy also experience shorter waiting times. We also find
that jobs that requested fewer cores tend to have lower waiting times.

\section{Related Works}\label{related-works}

The prediction of queue waiting times has been recognized as a
challenging task in the HPC community due to high variability and highly
skewed distributions
\cite{ref-feitelsonWorkloadModelingComputer2015},
\cite{ref-okaforQueueWaitTime2024}. This difficulty stems
from the multi-dimensional nature of the problem, where latency is
influenced by supercomputer policies, the internal state of the batch
scheduler, current queue congestion, and user-specific priority factors
like fairshare.

Early efforts in the late 1990s and early 2000s focused on the analysis
of workload traces to identify patterns and establish baseline
predictions \cite{ref-downeyPredictingQueueTimes1997},
\cite{ref-smithPredictingApplicationRun1998}. While some
research has shifted toward grid computing, HPC-specific studies
continue to evolve, often driven by diverse motivations. Some works aim
to improve user experience and resource sharing
\cite{ref-iosupHowAreReal2006},
\cite{ref-parkQueueCongestionPrediction2022}, while others
focus on high-level workflow orchestration
\cite{ref-nurmiEvaluationWorkflowScheduler2006},
\cite{ref-vercellinoMachineLearningApproach2023},
\cite{ref-stubbsSmartSchedulingTapis2024} or the specific
constraints of urgent computing
\cite{ref-brownPredictingBatchQueue2022}. Methodologically, predictions have evolved from similarity-based methods
such as k-Nearest Neighbors (k-NN)
\cite{ref-kianpishehPredictingJobWait2012},
\cite{ref-liEfficientResponseTime2005}, simulation and time series prediction \cite{ref-sonmezTracebasedEvaluationJob2009},
to more complex probabilistic models using Hidden Markov model
\cite{ref-parkQueueCongestionPrediction2022} and, more
recently, deep learning architectures
\cite{ref-lovellHierarchicalDeepLearning2024}. The Jean Zay
trace used in this paper is not publicly available. Despite
user pseudonymization, specific behaviors creates highly identifiying
patterns. Availability of such high quality traces is fundamental
for this kind of work, and yet they are not easily accessible. We
would like to thank again the IDRIS team for providing this workload
trace.

The submission of a sequence of identical jobs (referred to as ``job bursts'', or ``flurries'') has been identified as a known behavior in workload
traces \cite{ref-lovellHierarchicalDeepLearning2024} \cite{ref-feitelsonWorkloadModelingComputer2015} \cite{ref-feitelsonFlurries}.
Here, we go further by measuring the proportion of jobs affected
and quantifying the resulting impact on waiting times.

\section{Jean Zay Workload
  Traces}\label{dataset-presentation-jean-zay-workload-traces}

The two workload traces were sourced by the IDRIS team from
Slurm \cite{ref-ArchitectureSlurmWorkload}---which is the
batch scheduler used on Jean Zay---and from CEEMS
\cite{ref-10.1109ux2fSCW63240.2024.00233}.

\subsection{Jean Zay}

Jean Zay is one of the three French Tier-0 supercomputers. It consists of five
partitions\footnote{http://www.idris.fr/docs/jean-zay/jean-zay/jean-zay-hardware}: one scalar
partition (CPU only) and four accelerated partitions (CPU+GPU). The system is operated by IDRIS.

The cores listed in Table \ref{tab:jean-zay-partitions} are configured with hyperthreading, with two threads per core.
The number of cores per CPU varies across partitions. For instance,
each CPU in the cpu\_p1 partition has 20 physical cores, and
each node contains two CPUs, for a total of 40 physical cores per node. Note
that Slurm refers to hardware threads as CPUs ---which can be misleading. Throughout
the remainder of this work, the term core always refers to a physical
CPU core rather than a hardware thread, and the term CPU \textbf{never} refers
to either cores or threads (except in the Table \ref{tab:dataset-fields},
describing Slurm logs columns).

\begin{table}[ht]
    \centering
    \caption{Characteristics of Jean Zay partitions.} \label{tab:jean-zay-partitions}
    \begin{tabular}{llccccc}
        \toprule Partition & Name             & n nodes & core/node & GPUs/node \\
        \midrule
        CPU                & \texttt{cpu\_p1} & 720     & 40        & --        \\
        Quad-GPU V100      & \texttt{p13}     & 396     & 40        & 4         \\
        Octa-GPU V100      & \texttt{p2}      & 31      & 24        & 8         \\
        Octa-GPU A100      & \texttt{p5}      & 52      & 64        & 8         \\
        Quad-GPU H100      & \texttt{p6}      & 364     & 96        & 4         \\
        \bottomrule\end{tabular}

\end{table}

Each partition provides several Quality of Service (QoS)\footnote{http://www.idris.fr/docs/jean-zay/slurm/slurm\_partitions\_cpu} classes,
which define limits on job duration and resource allocation. For example,
on the \texttt{cpu\_p1} partition, jobs exceeding 20 hours cannot use the
\texttt{t3} QoS and must instead request the \texttt{t4} QoS.

\subsection{Columns and Pre-processing}\label{columns-and-pre-processing}

The extracted columns are described in Table \ref{tab:dataset-fields}. Additional
columns were computed, as described in Table \ref{tab:dataset-additional-fields}.

\begin{table}[ht]
    \centering
    \caption{Description of the features}
    \label{tab:dataset-fields}
    \begin{tabularx}{.47\textwidth}{@{}lX@{}}
        \toprule
        \textbf{Field} & \textbf{Description}                                                                                                                                                                                \\
        \midrule
        JobID          & Identification number of the job or job step.                                                                                                                                                       \\

        Account        & Pseudonymized account identifier.                                                                                                                                                                   \\

        User           & Pseudonymized user identifier.                                                                                                                                                                      \\

        State          & The state of the job.                                                                                                                                                                               \\




        Create Time    & Submission time of the job.                                                                                                                                                                         \\

        Elapsed        & The job's elapsed time, in seconds.                                                                                                                                                                 \\

        End Time       & Termination time of the job.                                                                                                                                                                        \\

        Num CPUs       & Total number of threads allocated to the job. For Jean Zay, this is \href{https://slurm.schedmd.com/faq.html\#cpu\_count}{twice the number of
        physical cores}                                                                                                                                                                                                      \\

        Num GPUs       & Total number of GPUs allocated to the job.                                                                                                                                                          \\

        Partition      & Identifies the partition on which the job ran.                                                                                                                                                      \\

        QoS            & Name of the Quality of Service for the job. \href{http://www.idris.fr/docs/jean-zay/slurm/slurm\_partitions\_cpu}{CPU}, \href{http://www.idris.fr/docs/jean-zay/slurm/slurm\_partitions\_gpu}{GPU}. \\

        Start Time     & Initiation time of the job.                                                                                                                                                                         \\

        \bottomrule
    \end{tabularx}
\end{table}

\begin{table}[ht]
    \centering
    \caption{Description of additional features}
    \label{tab:dataset-additional-fields}
    \begin{tabularx}{.47\textwidth}{@{}lX@{}}
        \toprule
        \textbf{Field}          & \textbf{Description}                                                                                                         \\
        \midrule                                                                                                                                               \\
        waiting\_time           & Calculated time spent in queue                                                                                               \\
        effective\_cores        & Actual number of cores booked, since the batch scheduler
        can change the number of requested resources to make it match a full
        node or a fixed portion of a node. See \href{http://www.idris.fr/docs/jean-zay/slurm/slurm\_partitions\_cpu}{Jean Zay documentation}.                  \\
        effective\_gpus         & Actual GPU resources used, since the batch scheduler
        can change the number of requested resources to make it match a full
        node or a fixed portion of a node. See \href{http://www.idris.fr/docs/jean-zay/slurm/slurm\_partitions\_gpu}{Jean Zay documentation}                   \\
        cores\_seconds          & Total cores$\times$seconds used for this job (number of cores times duration of job)                                         \\
        gpus\_seconds           & Total GPU$\times$seconds used for this job (number of GPUs times duration of job)                                            \\
        user\_historical\_usage & Historical resource usage for the user. See \href{https://slurm.schedmd.com/classic\_fair\_share.html}{Slurm documentation}. \\
        partition\_core\_prop   & Proportion of cores used in the partition,
        at the moment of job submission                                                                                                                        \\
        \bottomrule
    \end{tabularx}
\end{table}

Among all jobs, we dropped
``REQUEUED'' and ``NODE\_FAIL'' jobs to remove duplicate Job IDs. This resulted
in a trace of around 65,000 jobs for the week in October 2025 and 80,000 for the week in March 2026. Metrics
related to the Slurm fairshare algorithm were recomputed using several
sources to obtain the exact computation method. Namely, we used the Slurm source code \footnote{\url{https://github.com/SchedMD/slurm/blob/f0884a3e/src/plugins/priority/multifactor/priority\_multifactor.c\#\#L189-L202}},
documentation \footnote{\url{https://slurm.schedmd.com/classic\_fair\_share.html}}
and DeepWiki \footnote{\url{https://deepwiki.com/}} to understand the code.

\subsection{Partition and QoS}\label{partition-and-qos-usage}

Preliminary analysis of the October trace (Figure \ref{hist_usage}) indicates that the
\texttt{gpu\_p13} and \texttt{cpu\_p1} partitions account for 73\% of total number of job submissions,
with a predominance of the \texttt{t3} Quality of Service (QoS). However, this metric does
not take into account the duration and the resources asked for each job, which are summarized in
Figure \ref{avg_core_second}.

\begin{figure}
    \centering
    \includegraphics[width=\linewidth]{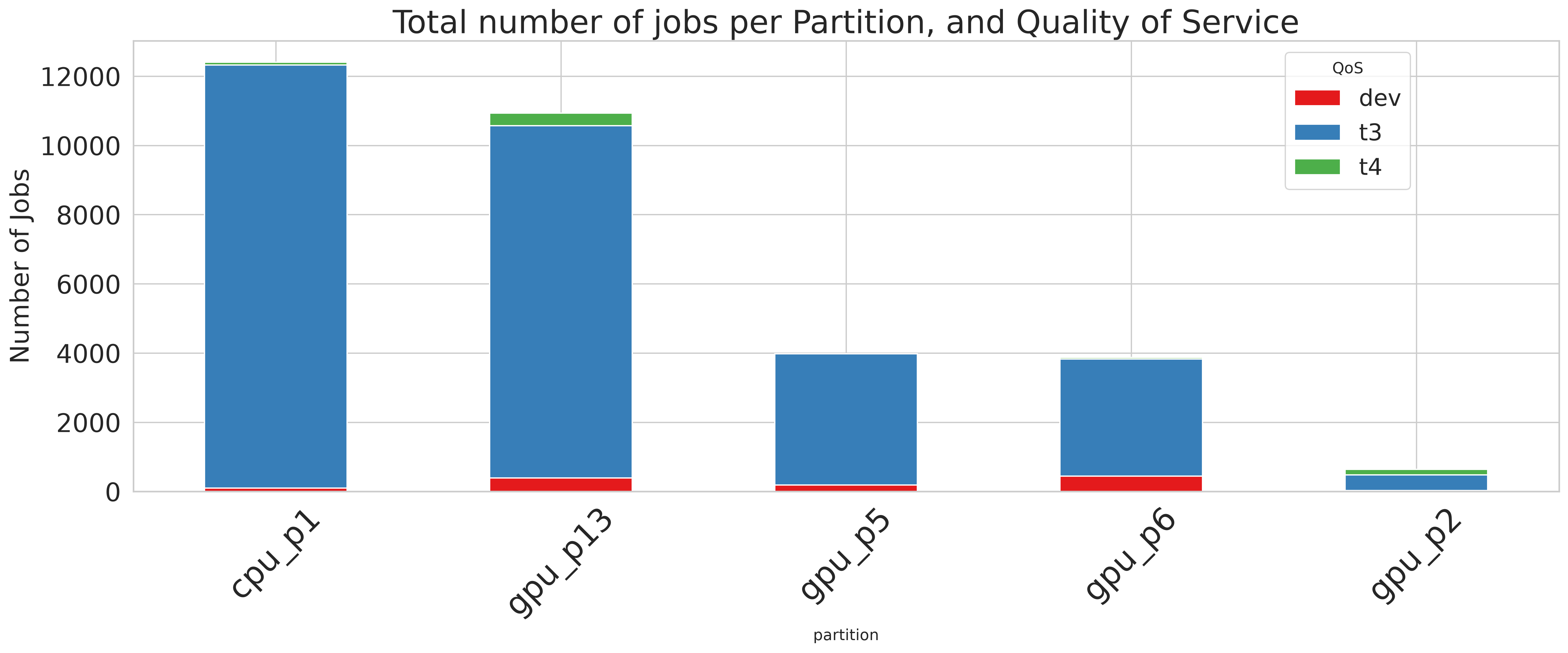}
    \caption{Usage of partition, with different Quality of Service - Jean
        Zay (IDRIS) - 15/10/25 21/10/25 \label{hist_usage}}
\end{figure}

The Jean Zay supercomputer operates under sustained demand, as
illustrated in Figures \ref{jz_charge}, \ref{avg_core_second}. Such high demand is
characteristic of this supercomputer and may have been exacerbated by the
end of the resource attribution period in October, and April.
The H100 (\texttt{p6}) partition, in particular, maintains a mean occupancy rate exceeding
90\%. For all analysis requiring partition occupancy, the first and last days of the traces were excluded to
remove partial records---specifically, jobs that started before the
collection window or remained running after it closed.

\begin{figure}
    \centering
    \includegraphics[width=\linewidth]{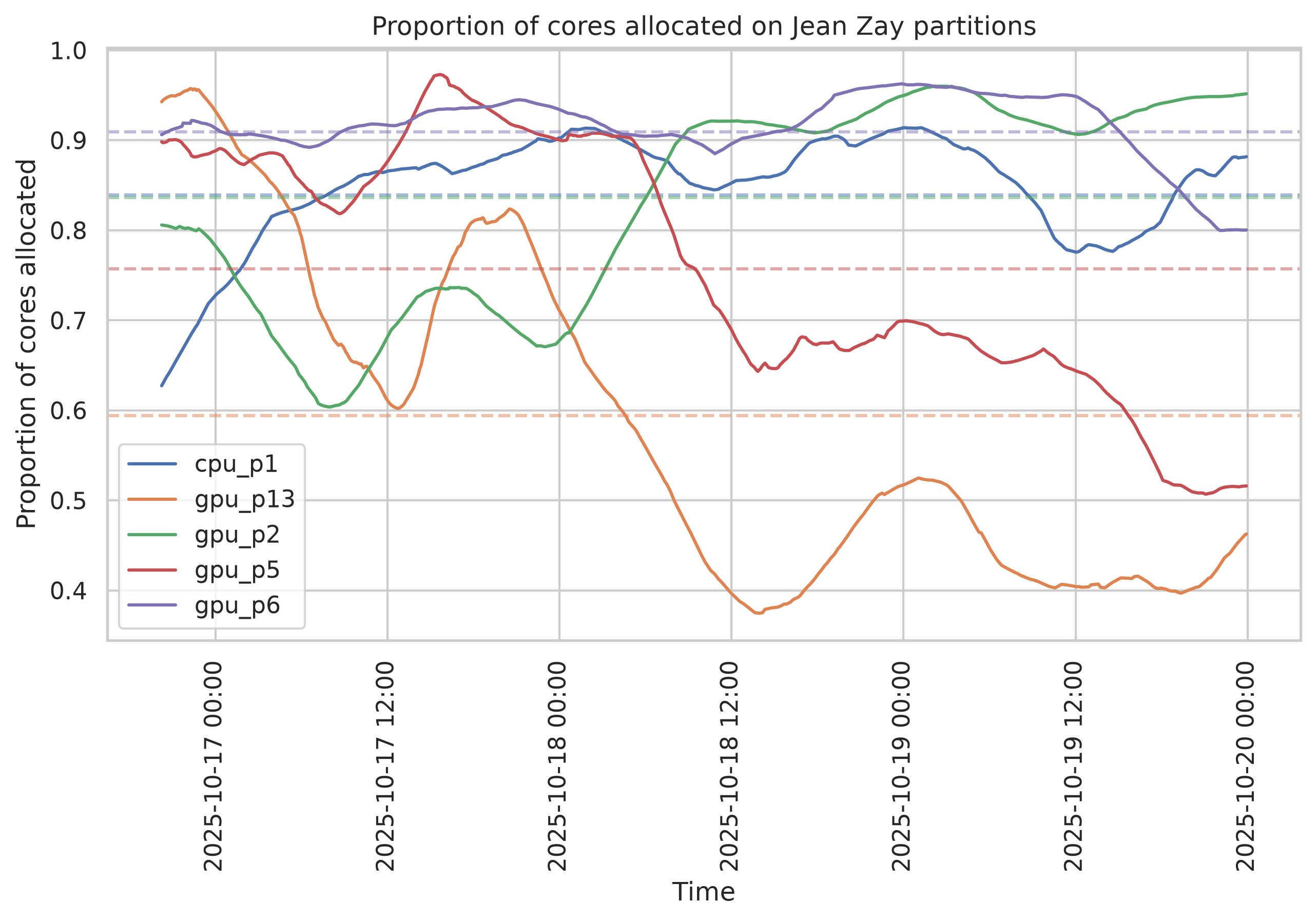}
    \caption{Occupancy of each partition (the dotted lines represent the average) -- Jean Zay (IDRIS)\label{jz_charge}}\end{figure}

\begin{figure}
    \centering
    \includegraphics[width=\linewidth]{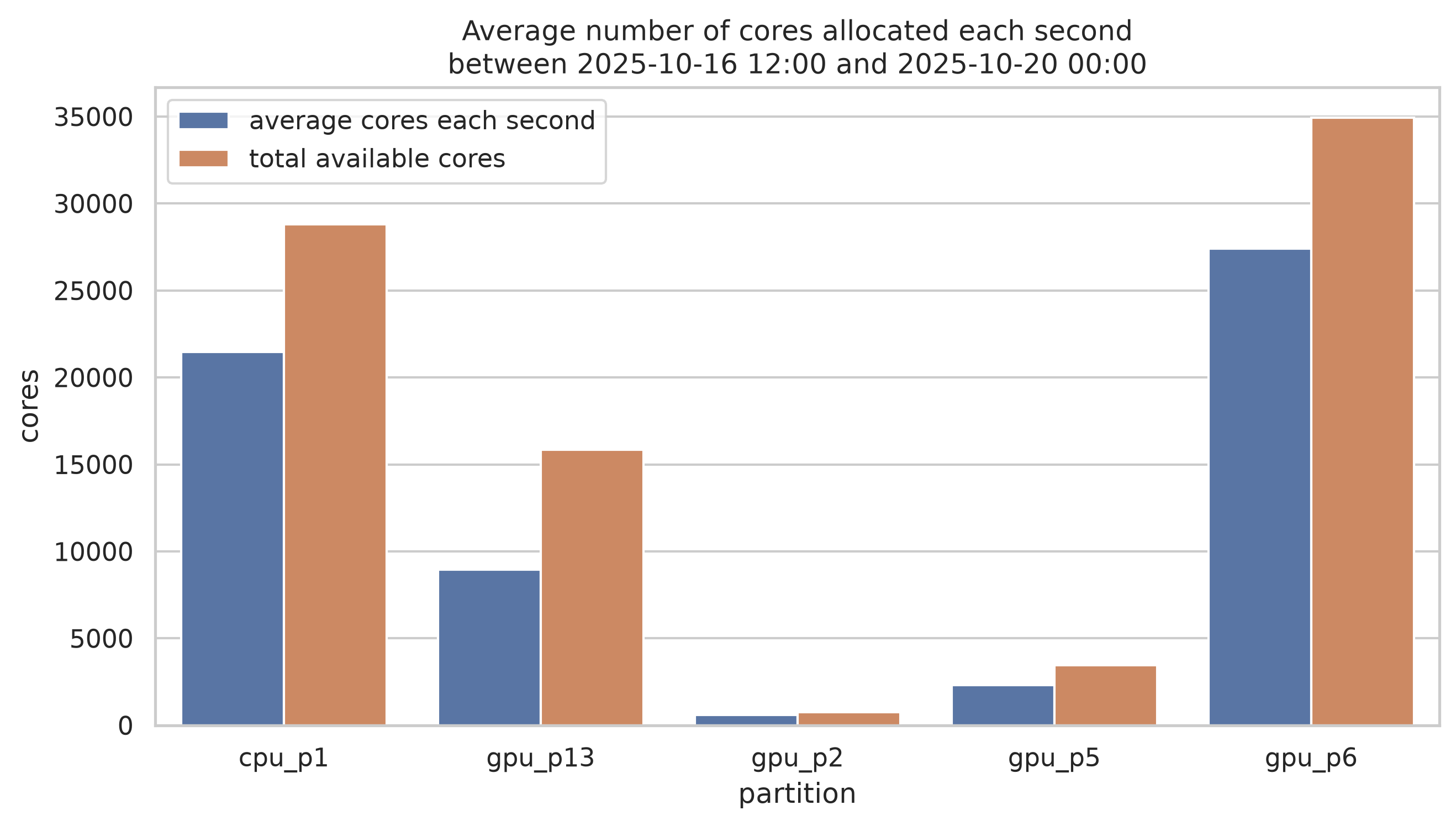}
    \caption{Average number of cores used each second, among partitions \label{avg_core_second}}
\end{figure}

Finally, the volume of job submissions is significantly lower on
weekends, as shown in Figure \ref{job_over_time}. This trend correlates
with the lower occupancy rates observed for the \texttt{gpu\_p5} and
\texttt{gpu\_p13} partitions during these periods (Figure
\ref{jz_charge}).

\begin{figure}
    \centering
    \includegraphics[width=\linewidth]{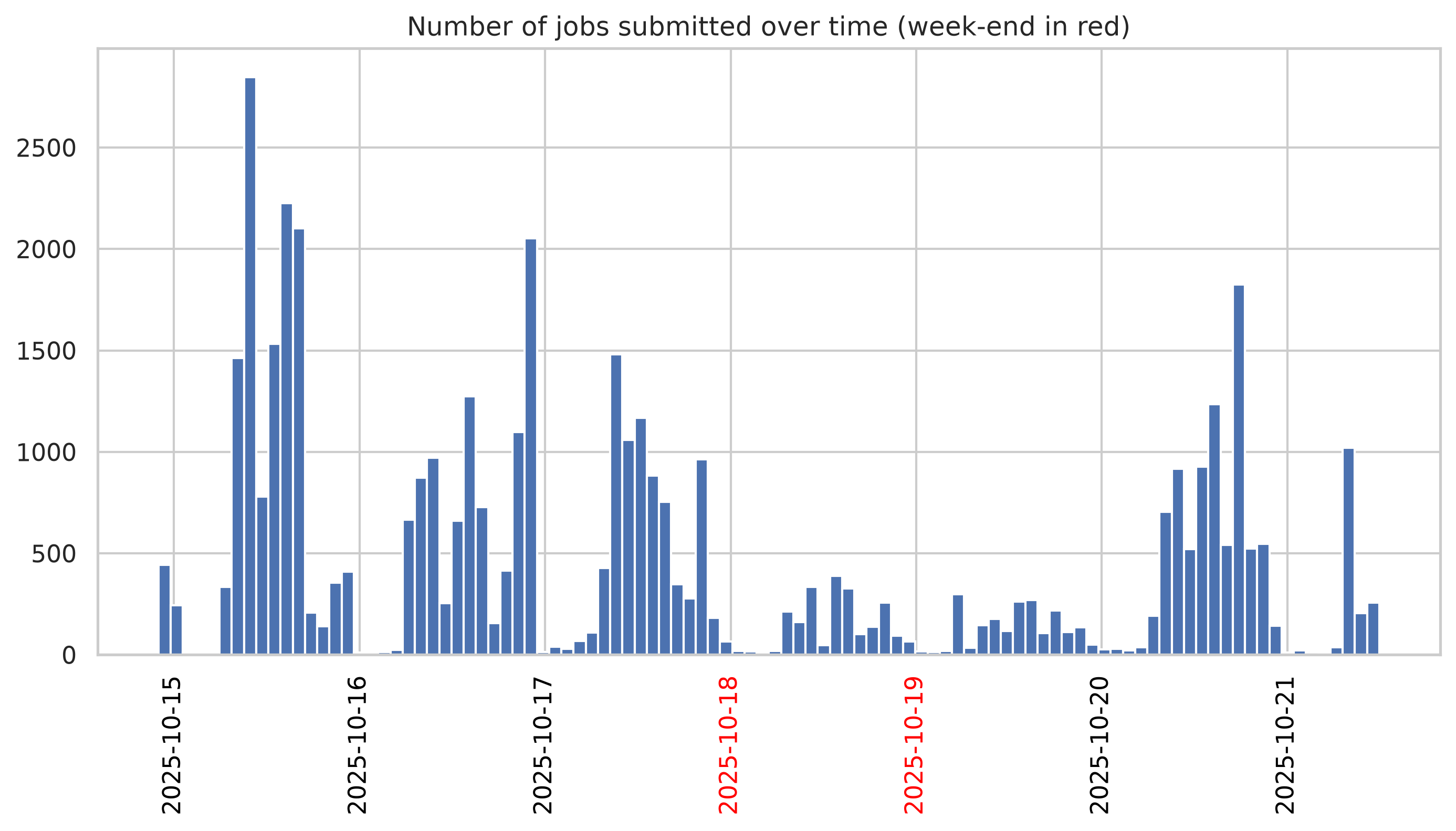}
    \caption{Number of COMPLETED jobs - Jean Zay (IDRIS)
        \label{job_over_time}}
\end{figure}

\subsection{Disparity Between Users}\label{disparity-between-users}

Regarding resource sharing and waiting times, we observed a high
disparity between users. A small number of users are responsible for the
vast majority of both the total cores$\times$seconds and the
cumulative waiting time of each partition. This is displayed in Figures \ref{total_sec_cpus_per_user}
and \ref{total_wt_per_user}, which are mass-count plots of cores$\times$seconds and waiting time per partition.
As described in \cite{ref-feitelsonWorkloadModelingComputer2015}, mass-count plots illustrate how
much a distribution is skewed, by displaying how much users are responsible for the total waiting times
(or cores$\times$seconds). A uniform distribution among users would result in the red dotted line. For waiting times, we
observe that, among all partitions, 80\% of users are only responsible for less than 10\% of all waiting times.
The distribution of cores$\times$seconds is less unbalanced, but we still observe that 80\% of users are responsible for less than 30\%
of all cores$\times$seconds, for each partition. Such unbalanced resource usage is not particularly surprising in
a supercomputer, where resources are shared between project allocations, and are not expected to be uniform.
On waiting times, the mass-count plot illustrates a highly skewed distribution, which is very common
for waiting times in HPC. Moreover, the concentration suggests that user-specific behavior is a primary determinant
of latency, and that part of the skewness could be explained by this behavior. This is confirmed later by the job burst phenomenon.

\begin{figure}
    \centering
    \includegraphics[width=\linewidth]{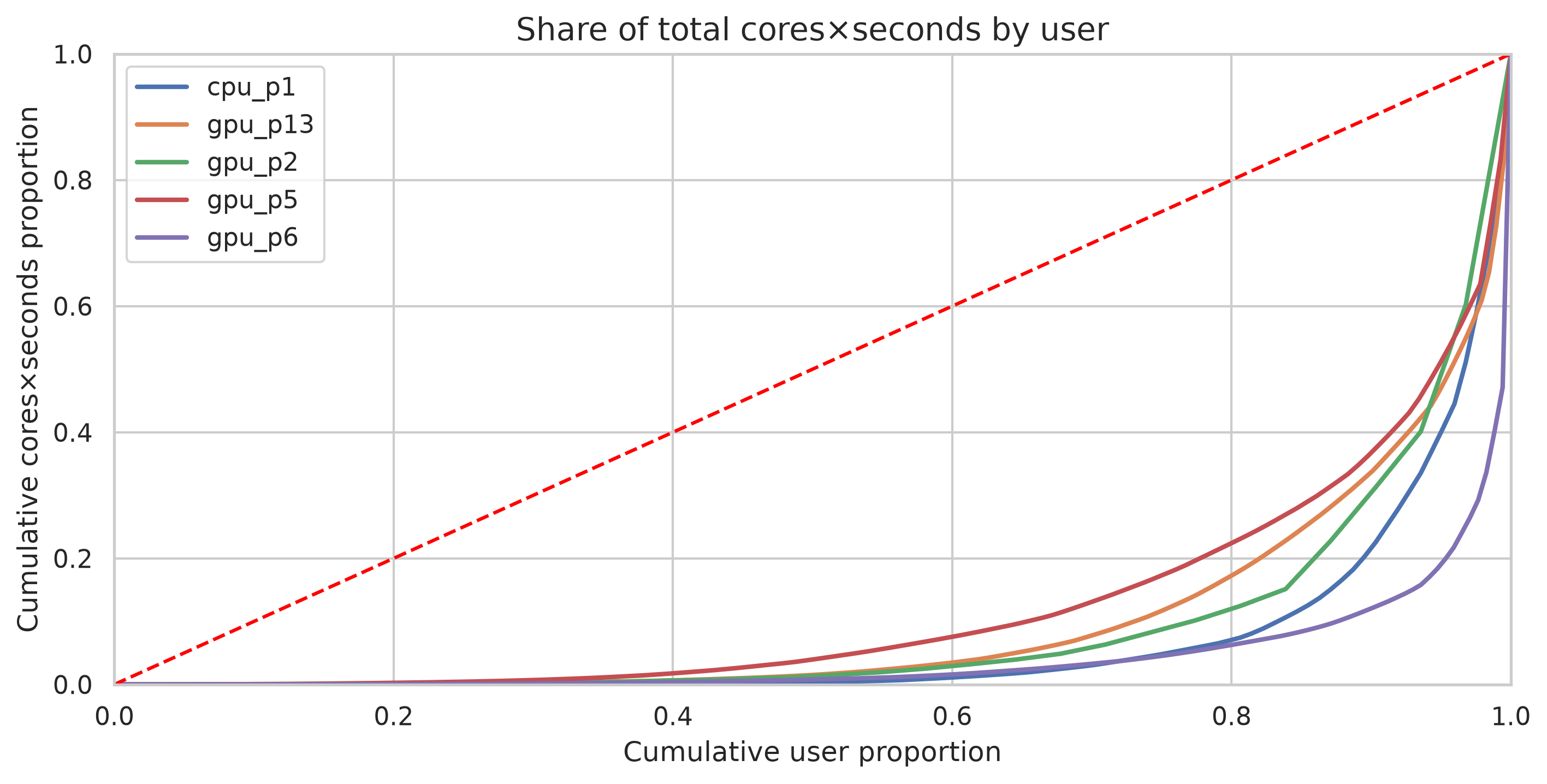}
    \caption{Mass-count plot of cores$\times$seconds - Jean Zay (IDRIS) - 16/10/25 20/10/25
        \label{total_sec_cpus_per_user}}
\end{figure}

\begin{figure}
    \centering
    \includegraphics[width=\linewidth]{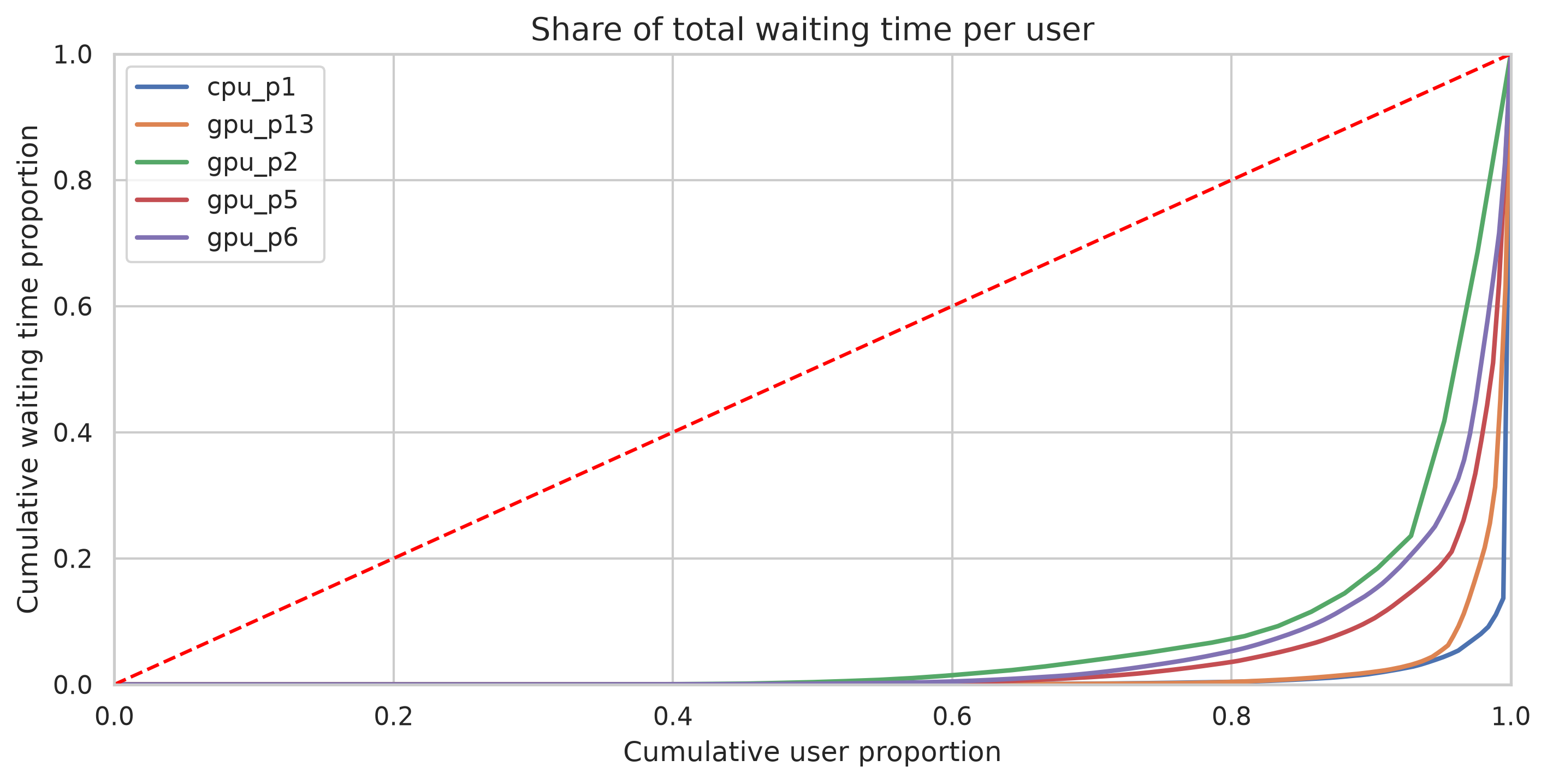}
    \caption{Mass-count plot of waiting times - Jean Zay (IDRIS) - 15/10/25 21/10/25
        \label{total_wt_per_user}}
\end{figure}

\section{The Phenomenon of Job
  Bursts}\label{the-phenomenon-of-job-bursts}

A significant discovery in the analysis of Jean Zay workload traces is
the prevalence of ``job bursts,'' defined as the submission of a series
of identical jobs at a single timestamp, by the same user. This is a common
behavior on HPC supercomputers, yet its frequency and its impact on
waiting times was, to our knowledge, never measured.
Slurm proposes a ``job array'' feature, which allows users
to submit a batch of jobs. The bursts observed here
were not part of job arrays. To reconstruct bursts, jobs with
the same number of cores and GPUs, submitted by the same user in
a 5 minutes time range, in the same partition, were defined as belonging to a burst.

The analysis reveals that these bursts are not anomalous but represent a
dominant pattern in the dataset. A major portion of all submitted jobs
is part of such bursts, with some individual sequences exceeding 100
identical jobs (Figure \ref{job_burst_perc}). Moreover, few bursts contain many
jobs, as displayed in Figure \ref{job_burst_n}. Less than 150 bursts, each containing
more than 100 jobs, represents between 50\% and 60\% of all jobs.

Furthermore, a strong linear correlation exists between a job's rank
within a burst and its observed waiting time. As the size of the burst
increases, the predictability of the waiting time via linear regression
improves, suggesting a systematic penalty for bulk submissions. This
behavior highlights a significant user-dependency in the scheduling
process: the scheduler effectively penalizes the submission of large
batches of similar jobs to prevent any single account from monopolizing
the machine. This systemic behavior is consistent with observations in
other workload traces
\cite{ref-lovellHierarchicalDeepLearning2024}. In Figure
\ref{job_burst_corr}, we display the distribution of the correlation
coefficient among all job bursts. The R² score of all linear regressions
follows the same trend as the Pearson correlation coefficient, as displayed in Figure
\ref{job_burst_r2}.

\begin{figure}
    \centering
    \includegraphics[width=\linewidth]{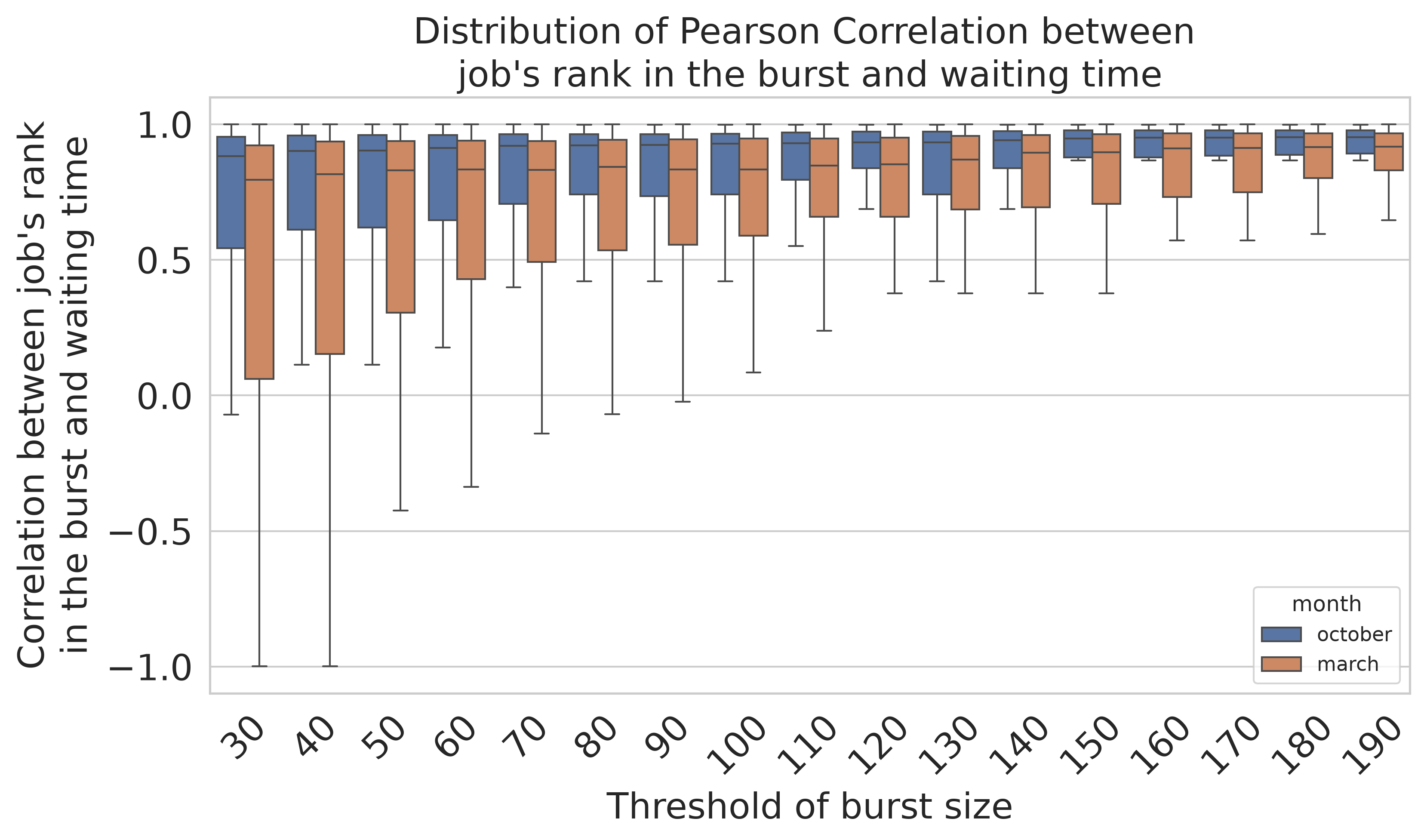}
    \caption{Correlation between waiting time and rank in the burst, among bursts
        above a given size \label{job_burst_corr}}
\end{figure}

As shown in Figure \ref{job_burst_perc}, long job bursts concern a high proportion of the
jobs. More than 40\% of the trace falls within bursts of size greater than
185. We removed jobs belonging to bursts with a correlation
coefficient above 0.8, and a length greater than 30. The remaining trace contains 30,000 jobs for October
and 45,000 for March. This respectively removed 55\% and 44\% of jobs. The dropped jobs were part
of only 100 burst for October, and 165 for March.

We conclude that, in our trace, waiting times are primarily determined by a job's rank
within its burst. This implies that waiting times are highly dependent among jobs submitted by
the same user: two jobs with identical submission parameters (number of cores, GPUs, \dots)
may exhibit vastly different waiting times. This shows
that waiting times are not i.i.d. (independent and identically distributed).
Performing a train-test split, or even computing the mean waiting time on traces,
can be misleading, since it will be strongly affected
by user-specific behavior. A sound approach, as described in \cite{ref-lovellHierarchicalDeepLearning2024},
is to carefully handle these bursts before train-test splitting.

\begin{figure}
    \centering
    \includegraphics[width=\linewidth]{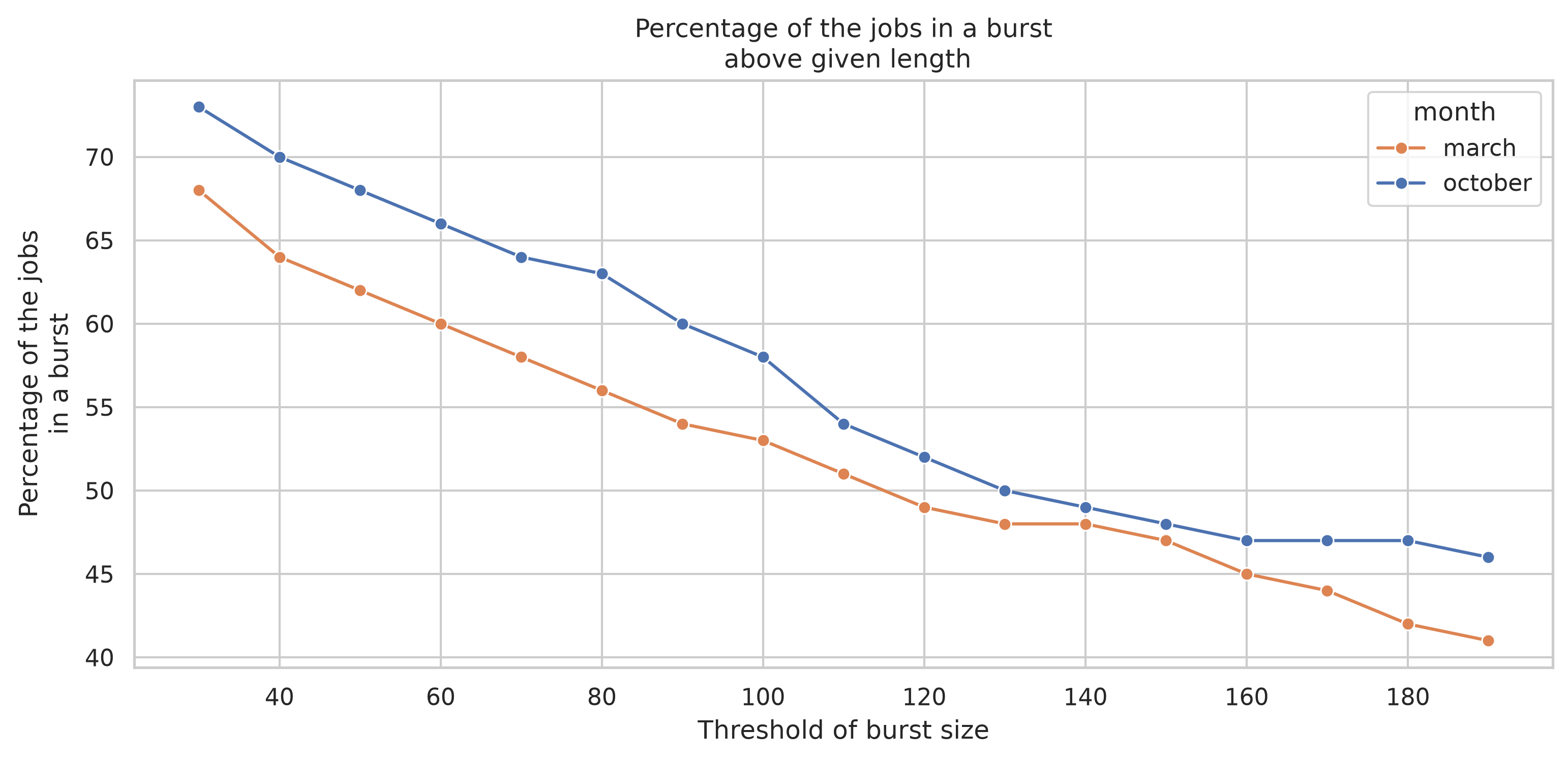}
    \caption{Percentage of the jobs belonging in bursts \label{job_burst_perc}}
\end{figure}

\begin{figure}
    \centering
    \includegraphics[width=\linewidth]{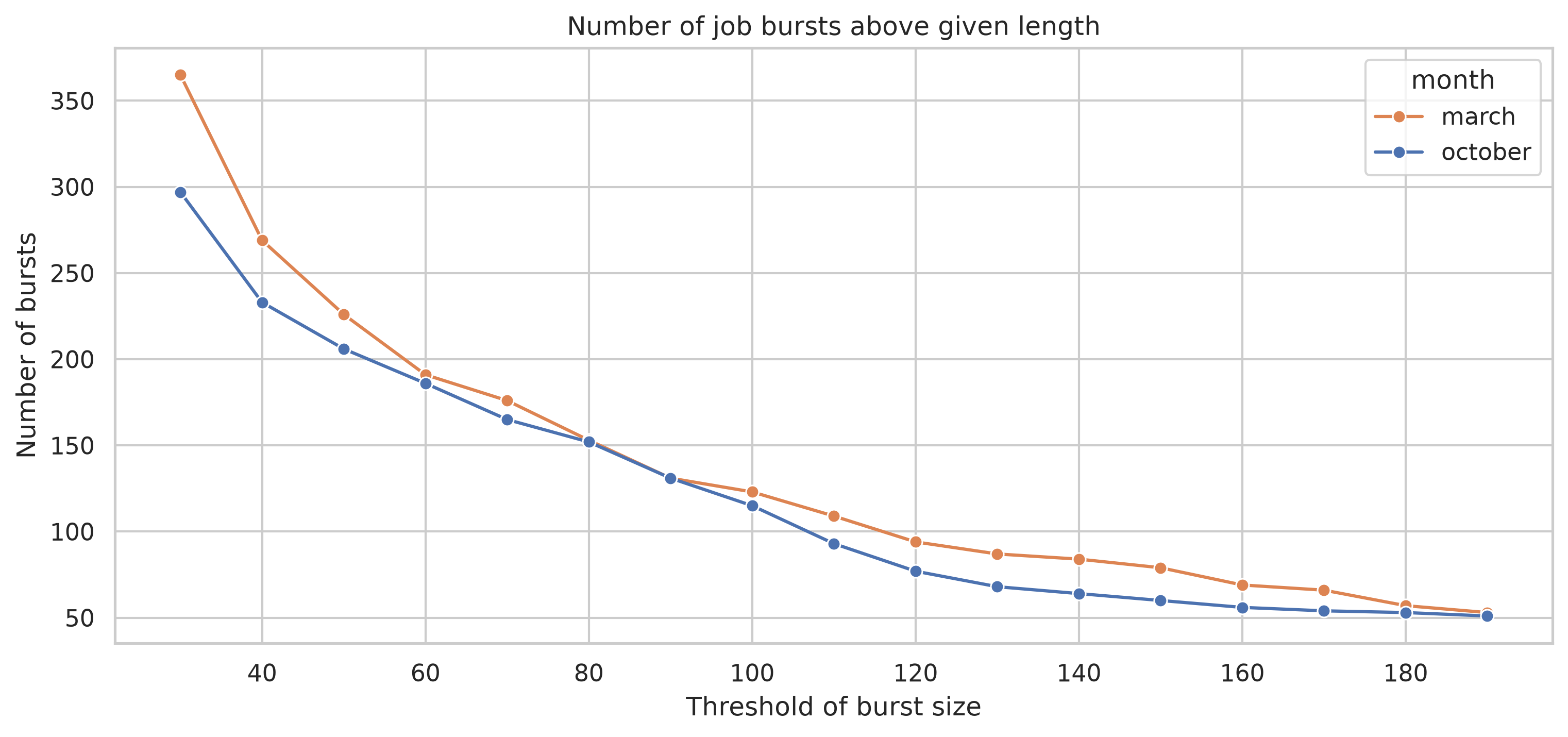}
    \caption{Number of job bursts \label{job_burst_n}}
\end{figure}

\begin{figure}
    \centering
    \includegraphics[width=\linewidth]{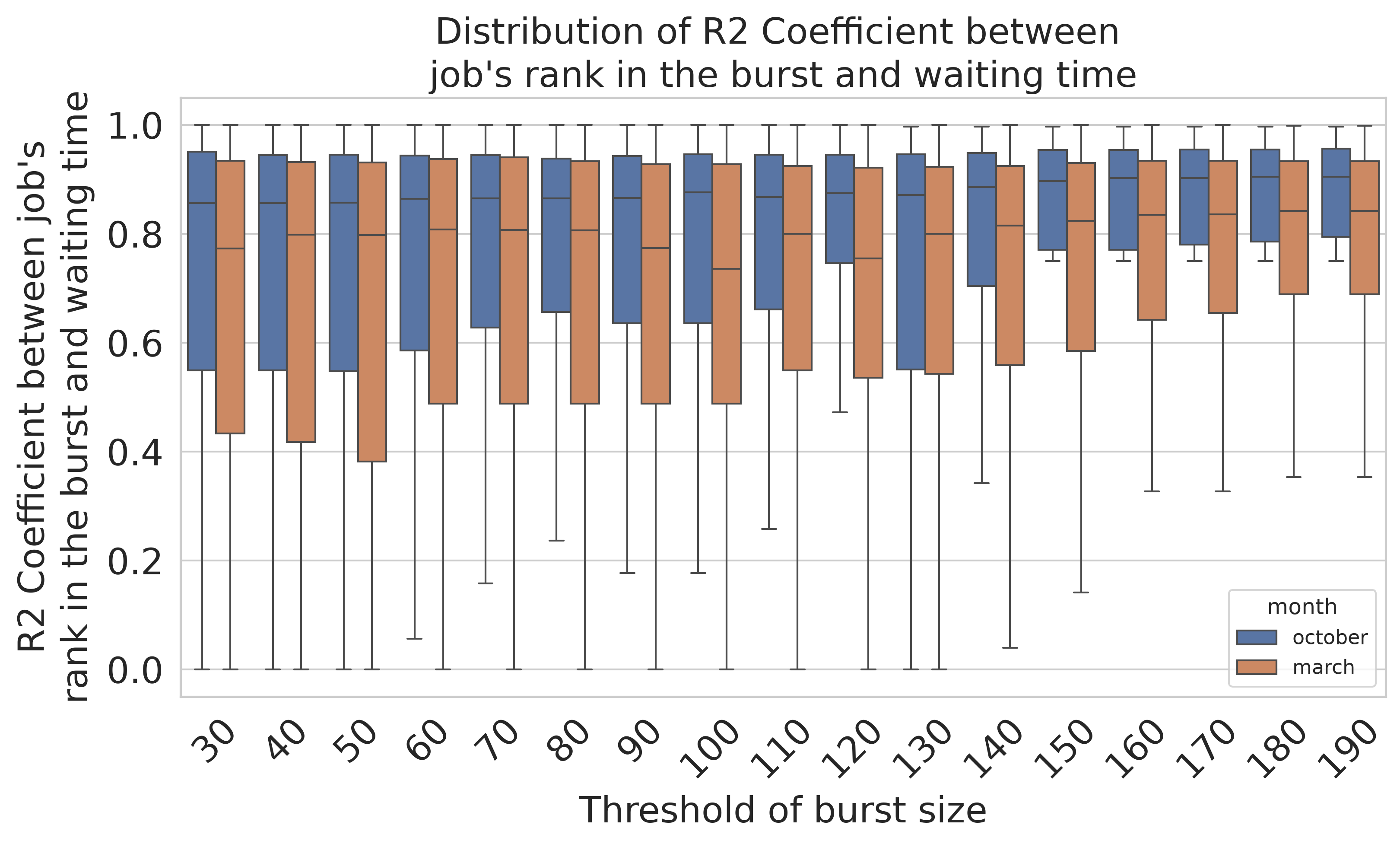}
    \caption{R2 of correlation between waiting time and rank in the burst, among bursts
        above a given size
        \label{job_burst_r2}}
\end{figure}

\section{Slurm Fairshare Factor}\label{slurm-fairshare-factor}

The analysis above suggests a strong dependency of user behavior towards waiting times.
A natural explanation could be Slurm's fairshare mechanism,
which prioritizes users based on their recent resource consumption and their attribution.

The fairshare factor is computed using what is called in Slurm's documentation the ``user historical usage.''
This historical usage is computed using a half-life formula, detailed in
Appendix \ref{appendix-a-theoretical-analysis-of-slurm-fairshare}. Proposition \ref{fs-exact}
shows that the historical usage is an exponentially decaying
function, with the half-life defined in Slurm. On Jean Zay, the
half-life is 14 days, meaning that a recently accumulated resource usage
only contributes half of its value to the fairshare computation after 14 days.

Nevertheless, the ``burst rank'' penalty identified in the previous
section operates on a much shorter timescale (hours to days). This is
illustrated in Figure \ref{historical_with_wt}, where each dot
represents a job, with the x-axis giving the job's start time (not its
creation time). We observe in this plot that several job bursts exhibit
very similar behavior, even as the user historical usage increases. If
the fairshare factor were solely responsible for this burst phenomenon,
one would expect higher waiting times for the second and third bursts,
as the historical usage would not have had sufficient time to decrease.
Instead, the waiting-time penalty appears to be ``forgotten'' between
separate bursts. Note that the job characteristics here are the same
across bursts: number of cores, partition, user, and account.

\begin{figure}
    \centering
    \includegraphics[width=\linewidth]{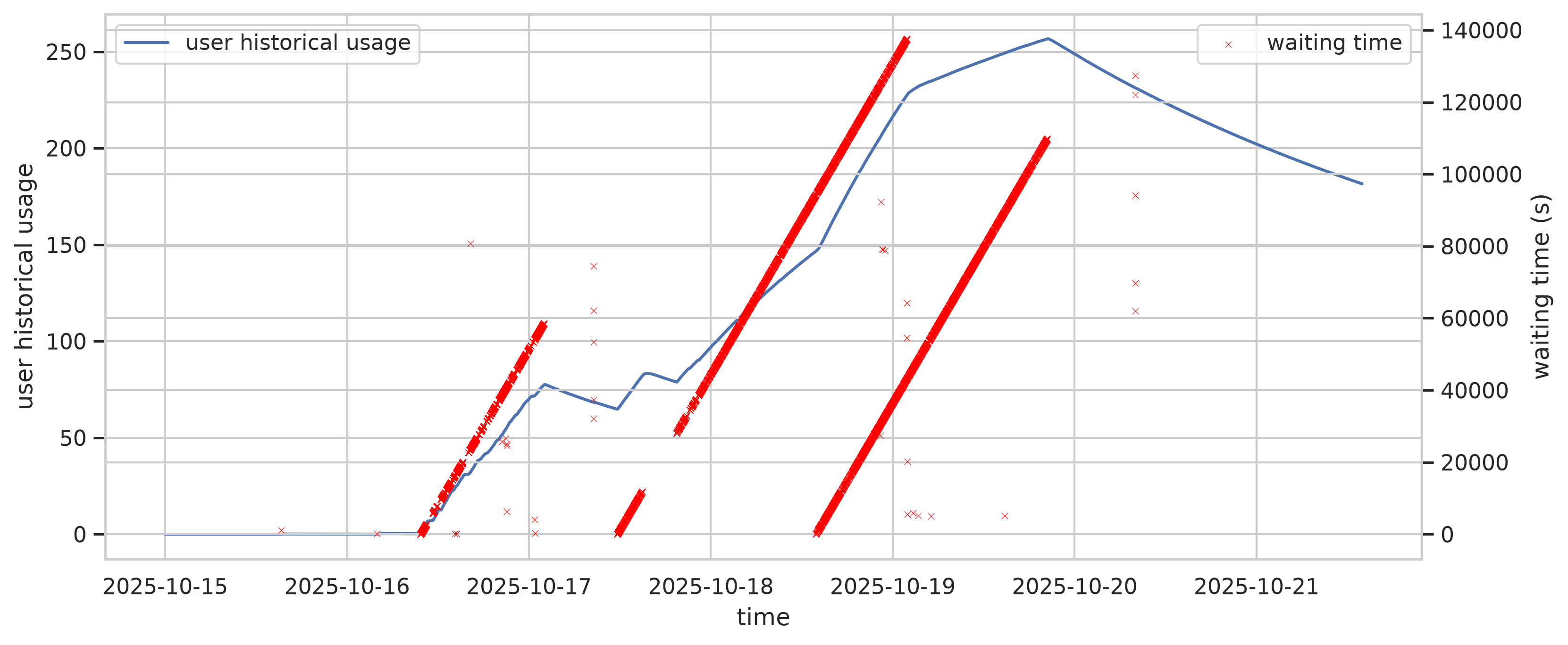}
    \caption{Example user historical usage (see Appendix \ref{appendix-a-theoretical-analysis-of-slurm-fairshare}) and waiting time of jobs \textbf{for a single user}
        \label{historical_with_wt}}
\end{figure}

\section{Analysis of Waiting Times}\label{analysis-of-waiting-times}

On traces with bursts removed (leaving 30,000 jobs for October
and 45,000 for March), we show here that jobs submitted during periods
of lower partition occupancy tend to have short waiting times. The same is observed for jobs with
fewer cores, though this effect is less consistent between the
two traces and the partitions.

\subsection{Impact of Partition Choice}\label{impact-of-partition-choice-and-qos}

We observe heterogeneous waiting times between partitions
and between the two observed periods (March and October). This is illustrated in
Figure \ref{mean_wt}, where the \texttt{gpu\_p13} partition has very low waiting time in October,
compared to other partitions. In March, however, the waiting time is significantly higher.

\begin{figure}
    \centering
    \includegraphics[width=\linewidth]{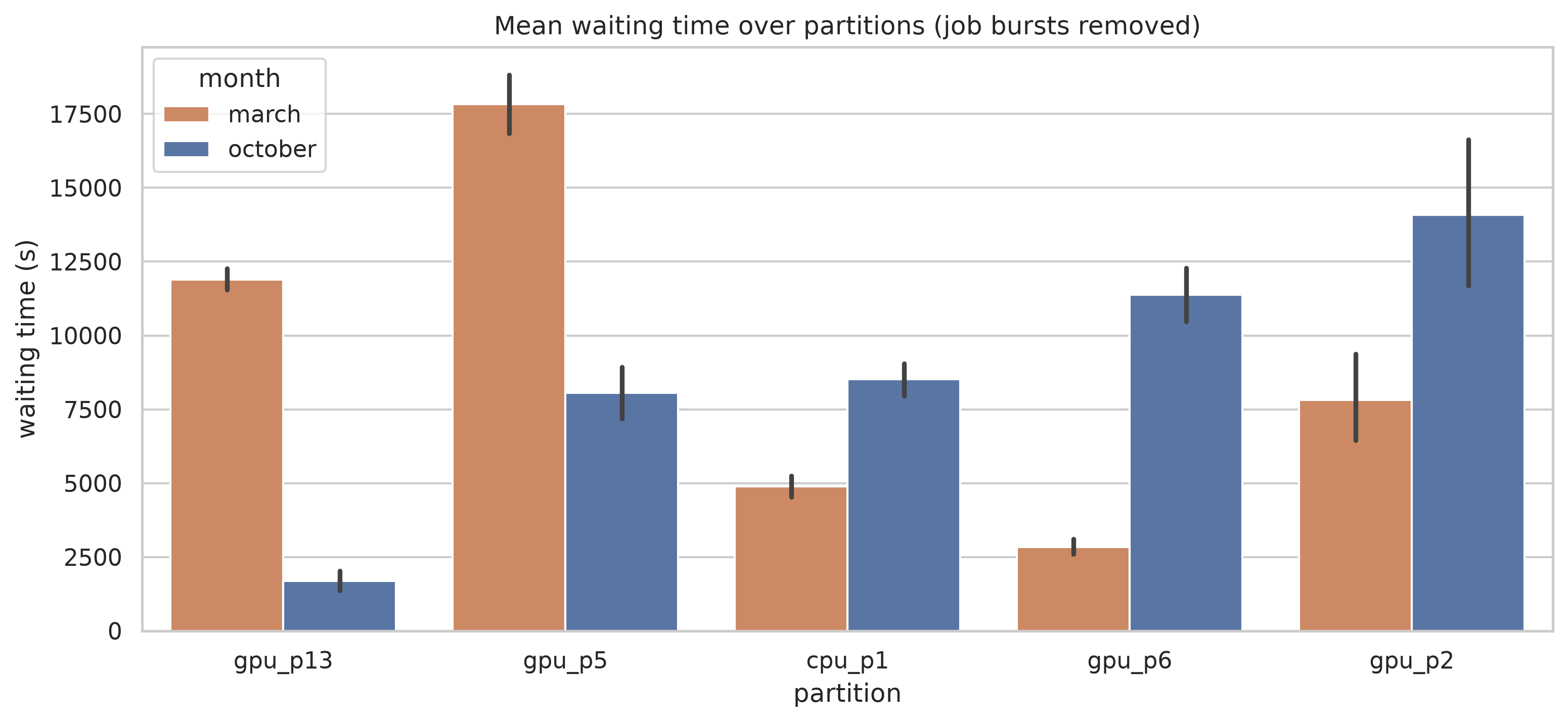}
    \caption{Mean Waiting Time per Partition}\label{mean_wt}
\end{figure}


\subsection{Impact of Partition Occupancy}\label{impact-of-partition-usage}

The low average waiting time on \texttt{gpu\_p13} in October could be related to the
reduced weekend occupancy (Figure \ref{jz_charge}). To investigate this, we splitted the jobs in two groups.
Group \texttt{H} (High) with waiting times above 45 seconds, and group \texttt{L} (Low), with waiting times below 45 seconds, following a
strategy similar to \cite{ref-lovellHierarchicalDeepLearning2024}, \cite{ref-brownPredictingBatchQueue2022}, \cite{kumar}.
The threshold was chosen to balance the two classes, \textbf{after removing job bursts}.
As shown in Figure \ref{hist_log_wt}, removing job bursts largely eliminates
long waiting times. This results in a significantly smaller threshold than
\cite{ref-lovellHierarchicalDeepLearning2024} (10 minutes).

\begin{figure}
    \centering
    \includegraphics[width=\linewidth]{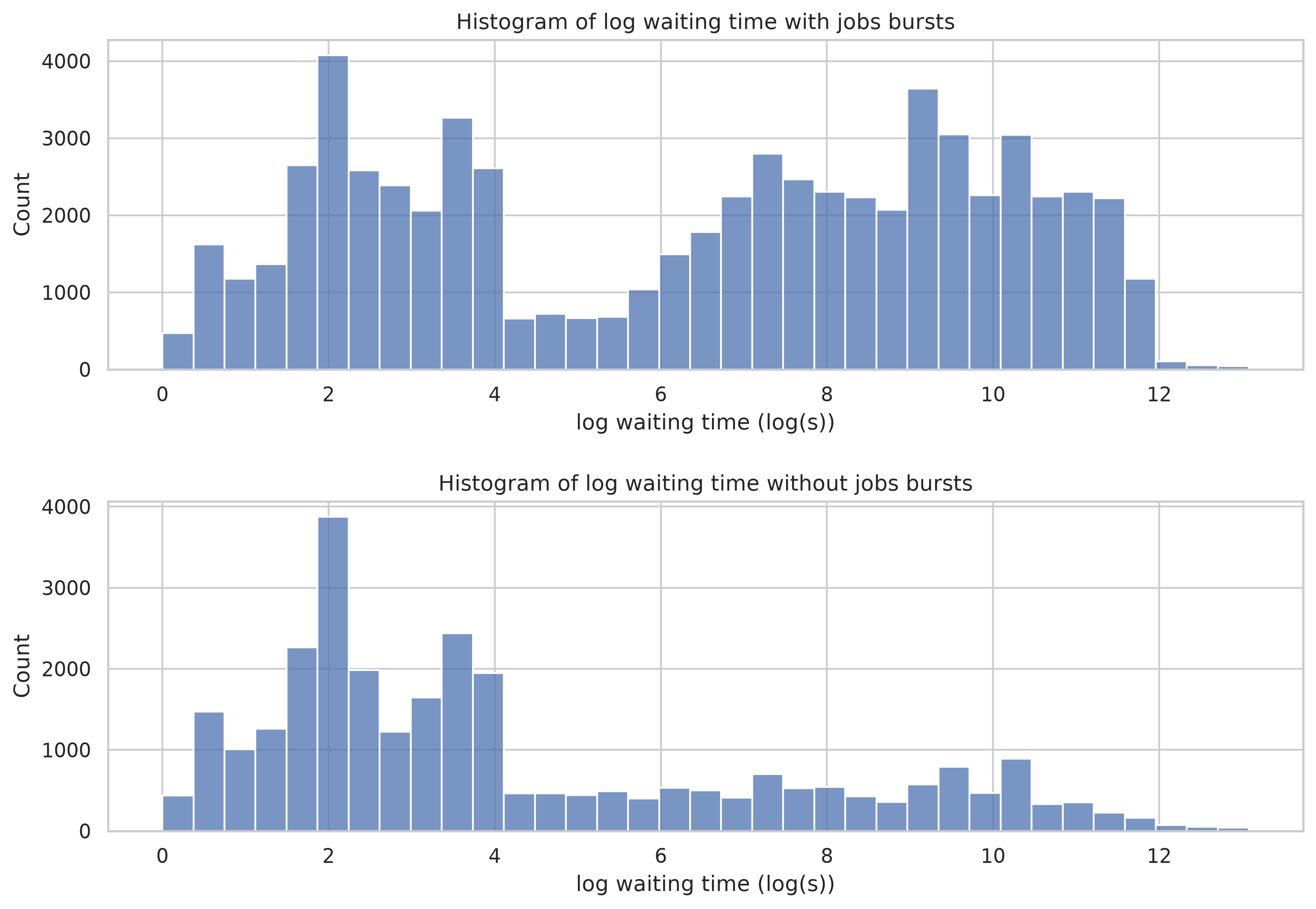}
    \caption{Histogram of log waiting time, before and after job burst
        removal - 15/10/26 21/10/26\label{hist_log_wt}}
\end{figure}

\begin{figure}
    \centering
    \includegraphics[width=\linewidth]{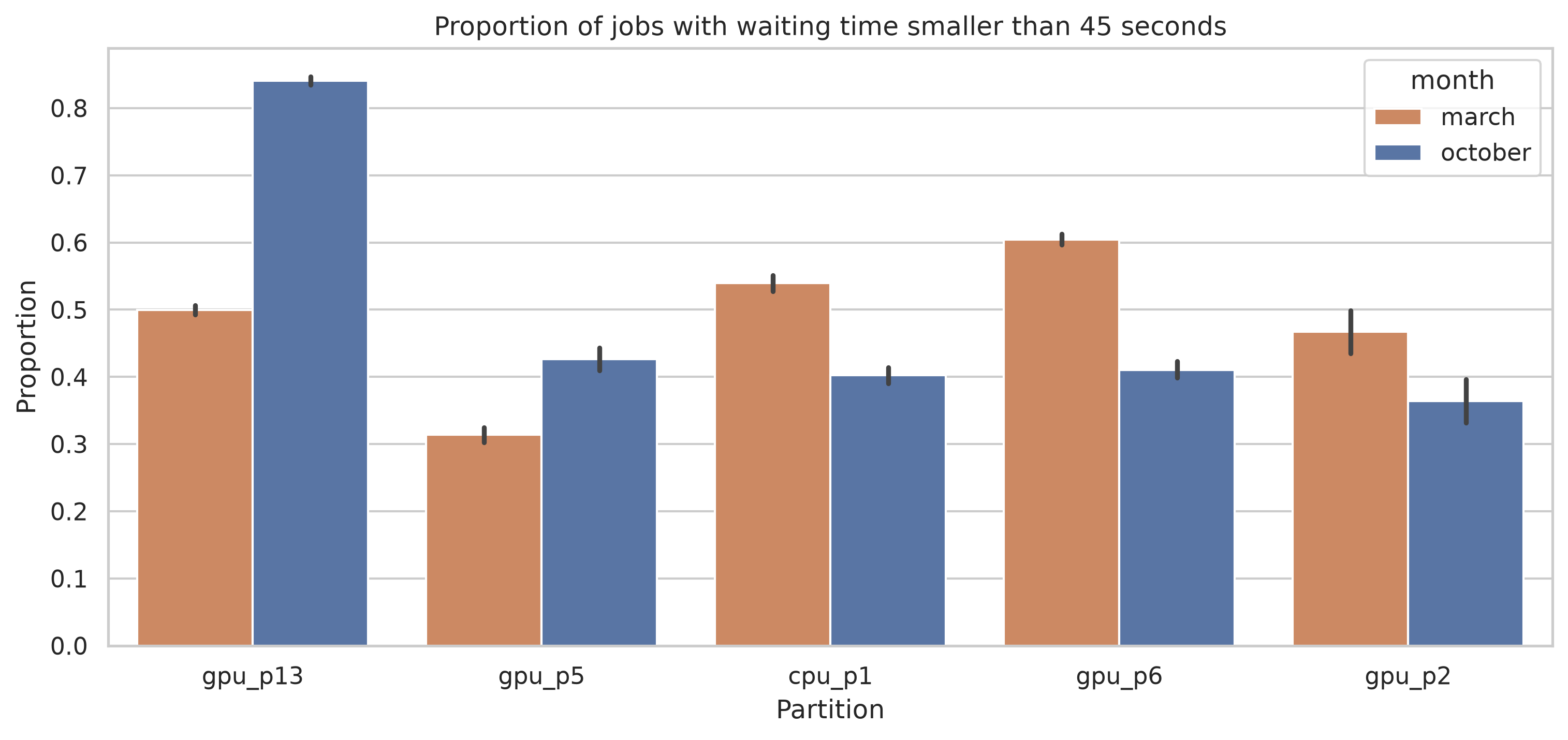}
    \caption{Proportion of jobs with waiting times smaller than 45 seconds,
        per partition \label{small_wt_prop}}
\end{figure}

In Tables \ref{tab:ttest-oct} and \ref{tab:ttest-march}, we show the results of Welch's
t-test\footnote{https://en.wikipedia.org/wiki/Welch's\_t-test} between the two groups.
The null hypothesis (H\textsubscript{0}) is that $\mu_{\texttt{H}} < \mu_{\texttt{L}}$, i.e., that the mean occupancy rate
of group \texttt{H} is lower than that of group \texttt{L}.
The Tables \ref{tab:ttest-oct} and \ref{tab:ttest-march} contain respectively the
size of group \texttt{H} and group \texttt{L} ($n_{\texttt{H}}$, $n_{\texttt{L}}$), their means and standard deviation ($\mu_{\texttt{H}}$, $\mu_{\texttt{L}}$, $\sigma_{\texttt{H}}$, $\sigma_{\texttt{L}}$).
the $t$ statistic and the $p$-value of the test (computed with scikit learn
\footnote{https://docs.scipy.org/doc/scipy/reference/generated/scipy.stats.ttest\_ind.html}).
A $p$-value below $0.05$ is interpreted as a rejected test, meaning that, significantly,
$\mu_{\texttt{H}} \geq \mu_{\texttt{L}}$.
In October, on partition \texttt{gpu\_p13}, jobs with short waiting times were submitted
on average on periods with 61\% occupancy, while the long ones were submitted on period
with average occupancy of 89\% occupancy. A similar drop is observed for \texttt{gpu\_p5},
which is the other partition with less occupancy during weekend (Figure \ref{jz_charge}).
For the March trace, the partitions with a lower occupancy between the 2 and 3 of March are those
for which the test is rejected. These tests tends to confirm that when a
partition has a very low occupancy, waiting times are smaller. Future works
could try to quantify this trend.

\begin{figure}
    \centering
    \includegraphics[width=\linewidth]{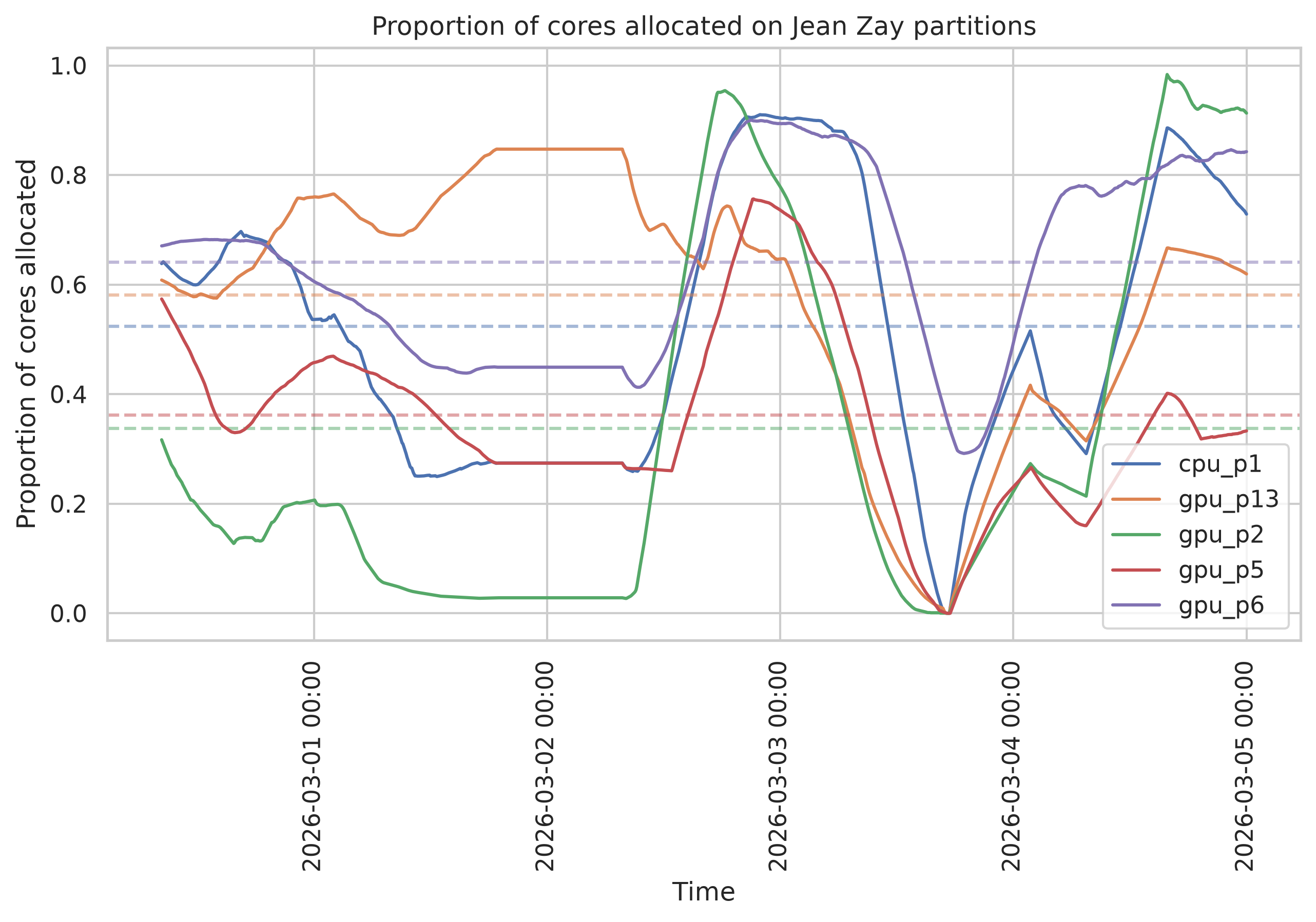}
    \caption{Occupancy rate of each partition - Jean Zay (IDRIS). A reboot of the supercomputer happened the 03/03/2026. \label{jz_charge_march}}
\end{figure}

\begin{table}[ht]
    \centering
    \caption{Mean partition usage among jobs with waiting time below 45 seconds ($\mu_{\texttt{L}}$) and others ($\mu_{\texttt{H}}$), October trace}
    \label{tab:ttest-oct}
    \begin{tabularx}{\columnwidth}{lrrrrrrrr}
        \toprule
        Partition & $n_{\texttt{H}}$ & $n_{\texttt{L}}$ & $\mathbf{\mu_{\texttt{H}}}$ & $\mathbf{\mu_{\texttt{L}}}$ & $\sigma_{\texttt{H}}$ & $\sigma_{\texttt{L}}$ & $t$  & $\mathbf{p}$    \\
        \midrule
        cpu\_p1   & 1975             & 1097             & \textbf{0.77}               & \textbf{0.78}               & 0.1                   & 0.1                   & -2.4 & \textbf{0.9917} \\
        gpu\_p13  & 858              & 5662             & \textbf{0.89}               & \textbf{0.61}               & 0.1                   & 0.2                   & 60.5 & \textbf{0.0000} \\
        gpu\_p2   & 276              & 103              & \textbf{0.78}               & \textbf{0.81}               & 0.1                   & 0.1                   & -2.4 & \textbf{0.9901} \\
        gpu\_p5   & 920              & 582              & \textbf{0.89}               & \textbf{0.73}               & 0.1                   & 0.2                   & 17.9 & \textbf{0.0000} \\
        gpu\_p6   & 1791             & 1053             & \textbf{0.92}               & \textbf{0.91}               & 0.1                   & 0.1                   & 4.6  & \textbf{0.0000} \\
        \bottomrule
    \end{tabularx}
\end{table}
\begin{table}[ht]
    \centering
    \caption{Mean partition usage among jobs with waiting time below 45 seconds ($\mu_{\texttt{L}}$) and others ($\mu_{\texttt{H}}$), March trace}
    \label{tab:ttest-march}
    \begin{tabularx}{\columnwidth}{lrrrrrrrr}
        \toprule
        Partition & $n_{\texttt{H}}$ & $n_{\texttt{L}}$ & $\mathbf{\mu_{\texttt{H}}}$ & $\mathbf{\mu_{\texttt{L}}}$ & $\sigma_{\texttt{H}}$ & $\sigma_{\texttt{L}}$ & $t$  & $\mathbf{p}$    \\
        \midrule
        cpu\_p1   & 1565             & 3167             & \textbf{0.83}               & \textbf{0.73}               & 0.2                   & 0.2                   & 18.4 & \textbf{0.0000} \\
        gpu\_p13  & 6796             & 7653             & \textbf{0.65}               & \textbf{0.65}               & 0.2                   & 0.1                   & -0.1 & \textbf{0.5244} \\
        gpu\_p2   & 394              & 294              & \textbf{0.72}               & \textbf{0.54}               & 0.4                   & 0.4                   & 6.4  & \textbf{0.0000} \\
        gpu\_p5   & 3348             & 1589             & \textbf{0.48}               & \textbf{0.48}               & 0.2                   & 0.2                   & 1.0  & \textbf{0.1545} \\
        gpu\_p6   & 2850             & 5324             & \textbf{0.77}               & \textbf{0.70}               & 0.1                   & 0.2                   & 20.3 & \textbf{0.0000} \\
        \bottomrule
    \end{tabularx}
\end{table}

\subsection{Impact of resources}

We use the same methodology as above, replacing the occupancy
by the number of cores allocated to each job. We show that, in the October trace, for the \texttt{cpu\_p1} partition,
the average number of cores in jobs with waiting times smaller than 45 seconds is 35, versus
102 for jobs with longer waiting times. Results are summarized in Table \ref{tab:ttest-oct-2} and \ref{tab:ttest-march-2}.

\begin{table}[ht]
    \centering
    \caption{Mean number of core among jobs with waiting time below 45 seconds ($\mu_{\texttt{L}}$) and others ($\mu_{\texttt{H}}$), October trace}
    \label{tab:ttest-oct-2}
    \begin{tabularx}{\columnwidth}{lrrrrrrrr}
        \toprule
        Partition & $n_{\texttt{H}}$ & $n_{\texttt{L}}$ & $\mathbf{\mu_{\texttt{H}}}$ & $\mathbf{\mu_{\texttt{L}}}$ & $\sigma_{\texttt{H}}$ & $\sigma_{\texttt{L}}$ & $t$  & $\mathbf{p}$    \\
        \midrule
        cpu\_p1   & 3954             & 2665             & \textbf{102}                & \textbf{35}                 & 330                   & 113                   & 11.8 & \textbf{0.0000} \\
        gpu\_p13  & 2112             & 11168            & \textbf{26}                 & \textbf{18}                 & 100                   & 49                    & 3.4  & \textbf{0.0004} \\
        gpu\_p2   & 470              & 269              & \textbf{10}                 & \textbf{5}                  & 13                    & 4                     & 8.7  & \textbf{0.0000} \\
        gpu\_p5   & 1876             & 1395             & \textbf{24}                 & \textbf{13}                 & 60                    & 15                    & 7.0  & \textbf{0.0000} \\
        gpu\_p6   & 3446             & 2399             & \textbf{149}                & \textbf{48}                 & 725                   & 62                    & 8.2  & \textbf{0.0000} \\
        \bottomrule
    \end{tabularx}
\end{table}
\begin{table}[ht]
    \centering
    \caption{Mean number of core among jobs with waiting time below 45 seconds ($\mu_{\texttt{L}}$) and others ($\mu_{\texttt{H}}$), March trace}
    \label{tab:ttest-march-2}
    \begin{tabularx}{\columnwidth}{lrrrrrrrr}
        \toprule
        Partition & $n_{\texttt{H}}$ & $n_{\texttt{L}}$ & $\mathbf{\mu_{\texttt{H}}}$ & $\mathbf{\mu_{\texttt{L}}}$ & $\sigma_{\texttt{H}}$ & $\sigma_{\texttt{L}}$ & $t$  & $\mathbf{p}$    \\
        \midrule
        cpu\_p1   & 3167             & 3709             & \textbf{121}                & \textbf{58}                 & 458                   & 427                   & 5.9  & \textbf{0.0000} \\
        gpu\_p13  & 9152             & 9133             & \textbf{14}                 & \textbf{13}                 & 44                    & 30                    & 1.4  & \textbf{0.0858} \\
        gpu\_p2   & 528              & 463              & \textbf{8}                  & \textbf{8}                  & 7                     & 8                     & -0.3 & \textbf{0.6190} \\
        gpu\_p5   & 4156             & 1901             & \textbf{11}                 & \textbf{9}                  & 15                    & 4                     & 8.8  & \textbf{0.0000} \\
        gpu\_p6   & 5175             & 7901             & \textbf{66}                 & \textbf{46}                 & 198                   & 80                    & 6.9  & \textbf{0.0000} \\
        \bottomrule
    \end{tabularx}
\end{table}

\section{Conclusion}\label{conclusion}

Understanding waiting times in HPC is critical for cross-facility
scientific workflows deployed in highly variable environments. We
presented a statistical analysis of waiting times on Jean Zay, one
of France's three Tier-0 supercomputers.

We found that job bursts account for the majority of long waiting times, with
more than 40\% of them being well described by simple linear regressions. This
highlights the importance of accounting for this phenomenon before applying
statistical or machine-learning methods. We also showed that waiting times tend
to be shorter during periods of low system occupancy, and for jobs requesting
small number of cores. Finally, we observed strong
heterogeneity across the two weeks under study as well as between partitions,
raising concerns about the generalizability of models trained on limited datasets.

These findings underline an important challenge for future cross-facility workflow
schedulers: access to clean workload trace datasets, ideally in near real time.
This is particularly challenging because Tier-0
supercomputers operate in highly restricted environments, and workload traces often
contain highly identifying information that cannot always be released while complying
with the terms of use (CGU).

Future work could extend this study to longer and other workload traces. In particular, the impact
of job bursts on waiting times could be investigated on other traces. It would also
be valuable to quantify the relationship between partition occupancy and waiting times.
Furthermore, the influence of user-specific factors, such as fairshare,
could be evaluated using workload traces spanning at least one month, given that the
half-life of user historical usage is 14 days.

\section{Appendix A: Theoretical Analysis of Slurm
  Fairshare}\label{appendix-a-theoretical-analysis-of-slurm-fairshare}

In this appendix, we provide the theoretical derivation of the Slurm
fairshare mechanism to justify why it is unlikely to be the primary
cause of the job burst penalty. Namely, we prove that user historical usage can be
seen as the convolution product between user resource usage and an exponential
decay. This allows to compute an exact formula, and to prove that it is
a piecewise exponential function.

\subsection{Slurm Priority and
    Fairshare}\label{slurm-priority-and-fairshare}

Slurm computes the priority of a job based on several factors: age in
the queue, job size, fairshare, and others \footnote{\url{https://slurm.schedmd.com/priority\_multifactor.html}}.
These are combined into a weighted sum. We focus here on the ``classic
fairshare algorithm'' \footnote{\url{https://slurm.schedmd.com/classic\_fair\_share.html}}
employed on Jean Zay.

The fairshare factor is a real number between 0 and 1: values below 0.5
indicate over-consumption, while values above 0.5 indicate
under-consumption.

\subsection{User Historical Usage}\label{user-historical-usage}

The main ingredient in the computation of the fairshare factor is the
user historical usage, computed as a rolling weighted mean of the
cores$\times$seconds previously allocated to each user. Let \(N_u\) be the step
function representing the number of cores allocated to user
\(u \in \mathcal U\) over time.

\begin{definition}
    For each user $u \in \mathcal U$, and each time stamp $t \in \mathbb R$, the user historical usage at time $t$ is:
    \begin{align*}
        UH_u(t,\delta)
         & = \sum_{k \in \mathbb{N}} RU(t,\delta,k)D^{k\delta} \\
         & = RU(t,\delta,0)
        + RU(t,\delta,1)D^{\delta}                             \\
         & \quad + RU(t,\delta,2)D^{2\delta}
        + \cdots
    \end{align*}
    Where $RU(t, \delta, k)$ is the raw utilization (cores$\times$seconds) over the $k$-th time period of length $\delta$ before $t$, and $D$ is the decay factor. On Jean Zay, the half-life $\alpha$ is 14 days, $D = \exp(-\ln(2)/\alpha)$ and $\delta$ is 5 minutes.
\end{definition}

We derive an alternative definition based on a convolution product:

\begin{proposition}
    Let $g(t) = \mathbf 1_{\mathbb R^+}(t) \exp(-\lambda t)$, with $\lambda = \ln(2)/\alpha$. As $\delta \to 0$, the historical usage converges in $L^{\infty}$ to:
    $$
        UH_u(t) := (N_u*g)(t) = \int_{\mathbb R} N_u(s)g(t-s)ds
    $$
    \label{prop:fs_alt_def}
\end{proposition}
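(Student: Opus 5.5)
We write $RU(t,\delta,k)=\int_{t-(k+1)\delta}^{t-k\delta} N_u(s)\,ds$ and $D^{k\delta}=e^{-\lambda k\delta}$, since $D=e^{-\lambda}$ with time measured in the same unit as $\alpha$. This turns $UH_u(t,\delta)$ into a single integral against a step kernel:
$$UH_u(t,\delta)=\int_{\mathbb R} N_u(s)\,g_\delta(t-s)\,ds,\qquad g_\delta(x)=\sum_{k\in\mathbb N}\mathbf 1_{[k\delta,(k+1)\delta)}(x)\,e^{-\lambda k\delta}.$$
So $g_\delta$ is the left-endpoint step approximation of $g$ on the grid $\delta\mathbb N$, and
$$UH_u(t,\delta)-(N_u*g)(t)=\big(N_u*(g_\delta-g)\big)(t).$$

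The key steps are as follows.

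\textbf{(i) Bound on $N_u$.} We assume $N_u$ is measurable and bounded, $\|N_u\|_\infty\le M$. This holds because the number of cores of the machine is finite, e.g.\ $M\le\sum$ cores of all partitions from Table~\ref{tab:jean-zay-partitions}. We also take $N_u$ to vanish before some initial time, or simply bounded on $\mathbb R$; only boundedness is needed.

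\textbf{(ii) Young-type bound.} Uniformly in $t$,
$$|UH_u(t,\delta)-UH_u(t)|\le M\,\|g_\delta-g\|_{L^1(\mathbb R^+)}.$$

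\textbf{(iii) Computing the kernel error.} Since $g$ is decreasing and $g_\delta\ge g$ on $\mathbb R^+$, the error is explicit:
$$\|g_\delta-g\|_{1}=\sum_k\Big(\delta e^{-\lambda k\delta}-\tfrac{1}{\lambda}e^{-\lambda k\delta}(1-e^{-\lambda\delta})\Big)=\frac{\delta}{1-e^{-\lambda\delta}}-\frac1\lambda.$$
Expanding gives $\tfrac{\delta}{1-e^{-\lambda\delta}}=\tfrac1\lambda+\tfrac\delta2+O(\delta^2)$, so $\|g_\delta-g\|_1\to 0$. Convergence in $L^\infty$ therefore follows with rate $M\delta/2+O(\delta^2)$.

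The main obstacle is not analytic but one of modeling: one must fix the conventions so that the discrete definition really matches $g_\delta$. This means checking how Slurm's decay unit relates $D^{k\delta}$ to $e^{-\lambda k\delta}$, i.e.\ measuring $\delta$ in the same unit as $\alpha$, and checking whether period $k$ is half-open on the correct side. If Slurm instead applies the decay at the right endpoint, $e^{-\lambda(k+1)\delta}$, the kernel $g_\delta$ becomes a lower step approximation of $g$. The same computation then gives error $\tfrac1\lambda-\tfrac{\delta e^{-\lambda\delta}}{1-e^{-\lambda\delta}}\to0$, so the argument is robust to this convention.

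A second point needing care is that $N_u$ may be defined on all of $\mathbb R$. Boundedness alone suffices, because $g_\delta-g\in L^1$. No integrability of $N_u$ is required, which is why the convergence is uniform in $t$ and hence in $L^\infty$.
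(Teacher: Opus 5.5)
Your proof is correct. It follows the paper's skeleton: rewrite $UH_u(\cdot,\delta)$ as $N_u*g_\delta$ for a step kernel $g_\delta$, then bound $\|N_u*(g_\delta-g)\|_\infty$. The key estimate, however, uses the opposite Young pairing.

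The paper bounds the error by $\|N_u\|_1\,\|g_\delta-g\|_\infty$. This needs only a pointwise estimate of the kernel discrepancy, of size $1-D^{\delta}\approx\lambda\delta$, but it requires $N_u\in L^1$, i.e.\ finite total cores$\times$seconds. You bound it by $\|N_u\|_\infty\,\|g_\delta-g\|_1$. This needs only that $N_u$ be bounded by the machine size, at the cost of the exact geometric-series evaluation $\|g_\delta-g\|_1=\frac{\delta}{1-e^{-\lambda\delta}}-\frac1\lambda$.

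Your route is more general. Its constant does not depend on how long the usage history is, so it also covers $N_u$ nonvanishing on all of $\mathbb R$. The paper's leading constant $\lambda\|N_u\|_1$ grows with accumulated usage, and beats your $\|N_u\|_\infty/2$ only when $\|N_u\|_1<\|N_u\|_\infty/(2\lambda)$, i.e.\ for short histories.

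Your explicit kernel also avoids two slips in the paper's write-up. First, for the half-open periods $]t-(k+1)\delta,t-k\delta]$ the exponent is $\lfloor (t-s)/\delta\rfloor\delta$, not $\lceil (t-s)/\delta\rceil\delta$, so $g_\delta\ge g$, as you have it. Second, the sup-norm bound should be $1-D^{\delta}$, not $1-D^{1/\delta}$; the latter does not tend to $0$. Your remark on the endpoint convention covers either choice.

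The one step you leave implicit, exchanging $\sum_k$ and $\int$, is the paper's Beppo Levi argument. It is immediate since $N_u\ge 0$.
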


\begin{IEEEproof}
    From the above definition, we write, for $k \in \mathbb N$ and $t \in \mathbb R$, $RU(t, \delta, k)$ as :
    $$
        RU(t, \delta, k) = \int_{\mathbb R} \mathbf 1_{]t-(k+1)\delta, t-k\delta]}(s)N(s)ds.
    $$

    Hence, $$
        UH_u(t, \delta) = \sum_{k \in \mathbb N} \int_{\mathbb R} \mathbf 1_{]t-(k+1)\delta, t-k\delta]}(s) N(s) D^{k\delta}ds.
    $$

    Since all integrands $\mathbf 1_{]t-(k+1)\delta, t-k\delta]}(s)N(s) D^{k\delta}$ are positive, the Beppo Levi theorem allows us to exchange the sum and the integral, and gives :

    $$
        UH_u(t, \delta) = \int_{\mathbb R} \sum_{k \in \mathbb N} \mathbf 1_{]t-(k+1)\delta, t-k\delta]}(s)N(s) D^{k\delta}ds
    $$
    $$
        UH_u(t, \delta) = \int_{\mathbb R} N(s) (\sum_{k \in \mathbb N} \mathbf 1_{]t-(k+1)\delta, t-k\delta]}(s) D^{k\delta})ds
    $$
    $$
        UH_u(t, \delta) = \int_{\mathbb R} N(s) D^{\lceil \frac{t-s}{\delta}\rceil \delta} \mathbf 1_{\mathbb R_+}(t-s)ds
    $$

    For a given $s \in \mathbb R$, and $k \in \mathbb N$, $\mathbf 1_{]t-(k+1)\delta, t-k\delta]}(s)$ is non-zero iff:

    \begin{equation}
        t-(k+1)\delta<s\leq t-k\delta,
        \label{eq:proof}
    \end{equation}

    which is equivalent to $\frac{t-s}{\delta} - 1 <k\leq \frac{t-s}{\delta}$. This happens for $s \in \mathbb R$, and $k \in \mathbb N$ iff $k=\lceil \frac{t-s}{\delta} \rceil$, and $t-s \geq 0$.

    To conclude, we extract from equation \ref{eq:proof} the fact that, $D^{(t-s) + \frac{1}{\delta}} < D^{\lceil \frac{t-s}{\delta} \rceil \delta} \leq D^{t-s}$.
    Which happens because $u \mapsto \exp(-\frac{\lambda}{\alpha}\times u)$ is decreasing on $\mathbb R$.
    Hence, $| D^{\lceil \frac{t-s}{\delta}\rceil \delta} - D^{t-s}| = D^{t-s} - D^{\lceil \frac{t-s}{\delta}\rceil \delta} \leq D^{t-s}(1- D^{\frac{1}{\delta}})$.
    So, for each $s \in \mathbb R$ and $\delta >0$, $\sup_{t \in \mathbb R}|\mathbf 1_{\mathbb R_+}(t-s) D^{\lceil \frac{t-s}{\delta}\rceil \delta} - \mathbf 1_{\mathbb R_+}(t-s)D^{t-s} |\leq 1 - D^{\frac{1}{\delta}}$.
    The sequence of functions $g_{\delta}(u)=D^{\lceil \frac{u}{\delta}\rceil \delta} \mathbf 1_{\mathbb R_+}(u)$ converges in $L^{\infty}$ to $g$. Since $UH_u(., \delta) = N_u*g_{\delta}$, we have :

    $$
        \|UH_u(., \delta) - UH_u\|_{\infty} = \|N_u*(g_{\delta}-g)\|_{\infty} \leq \|N_u\|_1\|g_{\delta}-g\|_{\infty}
    $$

    Since $N_u$ is $L^1$ (the total cores$\times$seconds is finite), it follows from the above inequality that $UH_u(., \delta)$ converges in $L^{\infty}$ towards $UH_u$.

\end{IEEEproof}

From the convolution product properties, it follows that \(UH_u\) is a continuous function (as observed
in Figure \ref{historical_with_wt}).
Moreover, the $L^1$ norm of $N_u*g$ is equal to : $\|N_u\|_1\times\|g\|_1 = \|N_u\|_1\frac{1}{\lambda}$, since both functions are positive, using the Fubini theorem.
We conclude that the $L^1$ norm of $UH_u$ is the total cores$\times$seconds
allocated to the user, divided by $\lambda$. It does not depend on any scheduling choice.

Finally, since \(N_u\) is a step function, the convolution product
\(N_u*g\) can be computed explicitly:

\begin{proposition}\label{fs-exact}
    If we write $N_u = \sum_{i = 0}^M n_i \mathbf 1_{[a_i, a_{i+1}[}$, then for $t \in [a_i, a_{i+1}[$, the historical usage is:
    $$
        UH_u(t) = \frac{1}{\lambda} n_i + \frac{1}{\lambda} B_i e^{-\lambda t}
    $$
    where $B_i = \sum_{j=0}^i e^{\lambda a_j} (n_j-n_{j-1})$ with $n_{-1} = 0$.
    \label{prop:fs_exact}
\end{proposition}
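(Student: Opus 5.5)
The plan is to rewrite the step function $N_u$ as a telescoping sum of shifted Heaviside functions, so that the convolution with $g$ from Proposition \ref{prop:fs_alt_def} is linear and reduces to one elementary integral per jump. With $n_{-1}=0$, on $[a_0,a_{M+1}[$ we have
$$
N_u=\sum_{j=0}^{M}(n_j-n_{j-1})\,\mathbf 1_{[a_j,+\infty[}
$$
up to the tail beyond $a_{M+1}$. That tail is $-n_M\mathbf 1_{[a_{M+1},+\infty[}$, and since $g$ vanishes on $\mathbb R^-$ it does not affect $(N_u*g)(t)$ for $t<a_{M+1}$. So for $t\in[a_i,a_{i+1}[$ only the indices $j\le i$ contribute, and the support restriction $s\le t$ coming from $\mathbf 1_{\mathbb R^+}(t-s)$ removes every jump with $a_j>t$.

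The key computation is the response to one unit step. For $t\ge a$,
$$
(\mathbf 1_{[a,+\infty[}*g)(t)=\int_a^t e^{-\lambda(t-s)}\,ds=\frac{1}{\lambda}\bigl(1-e^{-\lambda t}e^{\lambda a}\bigr),
$$
and the value is $0$ for $t<a$. Summing over $j\le i$ with weights $n_j-n_{j-1}$, the constant parts telescope to $\frac1\lambda\sum_{j\le i}(n_j-n_{j-1})=\frac1\lambda n_i$. The exponential parts collect into $e^{-\lambda t}\sum_{j\le i}e^{\lambda a_j}(n_j-n_{j-1})$, which is $e^{-\lambda t}$ times the quantity $B_i$ of the statement. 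This gives the affine-in-$e^{-\lambda t}$ form $\frac1\lambda n_i\pm\frac1\lambda B_i e^{-\lambda t}$ on each interval $[a_i,a_{i+1}[$, and hence the piecewise exponential structure claimed after Proposition \ref{prop:fs_alt_def}.

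\textbf{Main obstacle: the sign in front of $B_i$.} The direct expansion above yields $\frac1\lambda(1-e^{-\lambda(t-a_j)})$ per jump. Each exponential term therefore enters with a minus sign when $B_i$ is built from the forward increments $n_j-n_{j-1}$. To land on the statement's $+\frac1\lambda B_i e^{-\lambda t}$, I would redo the computation carefully at a single jump and track which orientation of the increment appears. The test case is $i=0$: there the formula must give $UH_u(a_0)=0$ when $N_u$ vanishes before $a_0$, which pins down the sign of $B_0$ relative to $n_0$. I would then check that this orientation propagates consistently through the telescoping sum.

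If the check confirms that the stated sign requires $B_i$ to be read with the opposite orientation $n_{j-1}-n_j$, the statement as written needs that correction, and the interval-by-interval argument goes through unchanged once it is made. The remaining steps are routine: continuity at each $a_i$, i.e.\ that the expressions on consecutive intervals agree at $t=a_{i+1}$, which follows from the continuity of $N_u*g$ already noted; and the fact that boundary points do not affect the integral.
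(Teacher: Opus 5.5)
The paper gives no proof of this proposition. It only says that, since $N_u$ is a step function, the convolution ``can be computed explicitly''. Your decomposition into shifted steps with weights $n_j-n_{j-1}$ is exactly that computation, is presumably what produces the increments in $B_i$, and is correct.

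What you should not do is treat the sign as an open obstacle, because your own display already settles it. Summing the unit-step responses $\frac1\lambda(1-e^{-\lambda(t-a_j)})$ over $j\le i$ gives, for $t\in[a_i,a_{i+1}[$,
\[
UH_u(t)=\frac{1}{\lambda}\,n_i-\frac{1}{\lambda}\,B_i\,e^{-\lambda t},
\]
with $B_i$ exactly as defined in the statement. So the printed ``$+$'' is an error in the statement.

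Your $i=0$ test confirms this. Take $n_0\neq 0$. The printed formula gives $UH_u(a_0)=2n_0/\lambda$, hence a jump at $a_0$, since $UH_u\equiv 0$ on $]-\infty,a_0[$. That contradicts the continuity of $N_u*g$, which the paper itself points out. The corrected formula gives $\frac{n_0}{\lambda}(1-e^{-\lambda(t-a_0)})$, which starts at $0$ and rises toward $n_0/\lambda$.

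An independent check by integrating interval by interval gives
\[
\frac{n_i}{\lambda}+\frac{e^{-\lambda t}}{\lambda}\bigl[\sum_{j<i}n_j(e^{\lambda a_{j+1}}-e^{\lambda a_j})-n_ie^{\lambda a_i}\bigr],
\]
and summation by parts turns the bracket into $-B_i$.

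State the correction outright: either use the minus sign, or equivalently build $B_i$ from $n_{j-1}-n_j$. The piecewise-exponential conclusion the paper draws is unaffected, since only the sign of the coefficient changes.
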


This proves that \(UH_u\) is a \textbf{piecewise exponential function}.
This slow decay (half-life of 14 days) seems to be incompatible with the rapid
``reset'' of the penalty observed between separate job bursts.

\subsection{User Normalized Usage and Fairshare
    Factor}\label{user-normalized-usage-and-fairshare-factor}

The normalized usage \(UN_u\) is defined as: \[
    UN_u(t) = \frac{UH_u(t)}{\sum_{y \in \mathcal U}UH_y(t)}
\] The final fairshare factor \(FS_u\) is then computed as
\(FS_u(t) = 2^{-\frac{UN_u(t)}{D.S_u}}\), where \(S_u\) is the user's
share and \(D\) is a dampening factor.


\end{document}